\definecolor{lightblue}{rgb}{0.5,0.5,1.0}
\definecolor{darkred}{rgb}{0.5,0,0}
\definecolor{darkgreen}{rgb}{0,0.5,0}
\definecolor{darkblue}{rgb}{0,0,0.5}
\newcommand{\Traces}{\textsc{Traces}}
\newcommand{\saucy}{\textsc{saucy}}
\newcommand{\dejavu}{\textsc{dejavu}}
\newcommand{\Slift}{\uparrow}
\newcommand{\Sred}{\downarrow}
\newcommand{\xmark}{\ding{55}}%
\newcommand{\xparagraph}[1]{\textbf{#1}}
\DeclareMathOperator{\Sym}{Sym}
\DeclareMathOperator{\SymL}{SymL}
\DeclareMathOperator{\SymV}{SymV}
\DeclareMathOperator{\syn}{syn}
\DeclareMathOperator{\sem}{sem}
\DeclareMathOperator{\Aut}{Aut}
\DeclareMathOperator{\Var}{Var}
\DeclareMathOperator{\unit}{unit}
\DeclareMathOperator{\Lit}{Lit}
\newtheorem{lemma}{Lemma}
\newtheorem{corollary}[lemma]{Corollary}
\newtheorem{definition}[lemma]{Definition}
\newcommand{\n}[1]{\overline{#1}}
\title{SAT Preprocessors and Symmetry} 
\author{Markus Anders}
\newcommand\blfootnote[1]{%
  \begingroup
  \renewcommand\thefootnote{}\footnote{#1}%
  \addtocounter{footnote}{-1}%
  \endgroup
}
\begin{document}
\maketitle

\begin{abstract}
Exploitation of symmetries is an indispensable approach to solve certain classes of difficult SAT instances. 
Numerous techniques for the use of symmetry in SAT have evolved over the past few decades.
But no matter how symmetries are used precisely, they have to be detected first. 
We investigate how to detect more symmetry, faster.
The initial idea is to reap the benefits of SAT preprocessing for symmetry detection.
As it turns out, applying an off-the-shelf preprocessor before handling symmetry runs into problems: the preprocessor can haphazardly remove symmetry from formulas, severely impeding symmetry exploitation. 

Our main contribution is a theoretical framework that captures the relationship of SAT preprocessing techniques and symmetry.
Based on this, we create a symmetry-aware preprocessor that can be applied safely before handling symmetry.
We then demonstrate that applying the preprocessor does not only substantially decrease symmetry detection and breaking times, but also uncovers hidden symmetry not detectable in the original instances.
Overall, we depart the conventional view of treating symmetry detection as a black-box, 
presenting a new application-specific approach to symmetry detection in SAT.
\blfootnote{The research leading to these results has received funding from the European Research Council (ERC) under the European Union's Horizon 2020 research and innovation programme (EngageS: grant agreement No.~{820148}).}
\end{abstract}

\section{Introduction}
Many difficult classes of SAT instances contain a large number of symmetries.
Exploitation of symmetries is an indispensable tool to speed up solving these instances. 
Various techniques, such as for example the use of symmetry breaking predicates or symmetry-based DPLL branching rules, have evolved over the past few decades \cite{DBLP:series/faia/Sakallah09, DBLP:conf/sat/KatebiSM10}. 
In fact, there is still ongoing research on how to use symmetries best while solving SAT \cite{DBLP:conf/sat/Devriendt0BD16, DBLP:conf/tacas/MetinBCK18, DBLP:conf/eecs/Treethanyaphong18, DBLP:journals/ijait/TchindaD19}.

In practice, state-of-the-art symmetry exploitation is based on syntactic symmetries of the formula \cite{DBLP:series/faia/Sakallah09, DBLP:conf/sat/Devriendt0BD16, DBLP:conf/tacas/MetinBCK18, DBLP:conf/eecs/Treethanyaphong18, DBLP:journals/ijait/TchindaD19}.
Syntactic symmetries are permutations of variables (or literals) that map a formula $F$ back to itself, i.e., a permutation $\varphi$ is a syntactic symmetry whenever $\varphi(F) = F$ holds.
The most common way to compute these symmetries is to first model the given input formula as a graph.
Then, the automorphism group of this model graph is computed using a graph isomorphism solver (e.g., \cite{DBLP:conf/dac/DargaLSM04, DBLP:conf/esa/AndersS21,McKay201494}). 
In fact, computing syntactic symmetries is polynomial-time equivalent to the graph isomorphism problem \cite{Crawford1992ATA}.

Generally speaking, this describes the two major components that make up the use of symmetry in SAT: 
first, there is symmetry detection, i.e., finding symmetries of the formula in the first place. 
Secondly, there is the exploitation of symmetries itself, i.e., using symmetries to cut away parts of the search space. 
In this paper, we investigate and improve the former: how to detect more symmetry, faster. 

\begin{table}[t]
	\centering
\begin{tabular}{|l|c|c|c|c|c|}\hline
& \textsc{C} & \multicolumn{2}{c|}{\textsc{B+C}} & \multicolumn{2}{c|}{\textsc{P+B+C}}\\
instance   &  $T_{solve}$ & \multicolumn{1}{c}{$T_{solve}$} & \#syms & \multicolumn{1}{c}{$T_{solve}$} & \#syms \\\hline
php(9,8)   &  6.74s & $<$1s & $1.46 \cdot 10^{10}$ & $<$1s  & $2.07 \cdot 10^6$ \\
php(15,14) &  $>$60s  & $<$1s & $1.14 \cdot 10^{23}$ & $>$60s & $8.72 \cdot 10^{10}$\\
php(21,20) &  $>$60s  & $<$1s & $1.24 \cdot 10^{38}$ & $>$60s & $2.43 \cdot 10^{18}$\\\hline
\end{tabular}
\caption{Pigeonhole principle solved with \textsc{cryptominisat} (\textsc{C}), \textsc{BreakID}+\textsc{cryptominisat} (\textsc{B+C}), and \textsc{cryptominisat} (preprocessor)+\textsc{BreakID}+\textsc{cryptominisat} (\textsc{P+B+C}).} \label{tab:php}
\end{table}
Due to the sheer size of SAT instances and number of symmetries, symmetry detection can indeed become expensive. 
So much so, that for the state-of-the-art symmetry breaking tool \textsc{BreakID}, the version that limits the time used for symmetry detection outperforms the version that is not time-limited \cite{DBLP:conf/sat/Devriendt0BD16}.
Often, the reason why handling symmetry is slow, is that the underlying instances are bloated with many easily reducible variables and clauses.
Some symmetry detection tools try to rectify this by using intricate graph-level preprocessing techniques.
In fact, these techniques have been meticulously engineered to deal with structures common in CNF formulas, i.e., low-degree vertices 
\cite{DBLP:conf/dac/DargaLSM04, DBLP:conf/dac/DargaSM08, McKay201494, DBLP:journals/corr/abs-0804-4881}.
While this marks the currently most successful approach to symmetry detection for CNF formulas, it does not yet exploit the underlying semantics of SAT.

In the SAT domain, the typical first step in tackling large formulas would be to first apply a SAT preprocessor to reduce the formulas \cite{DBLP:conf/sat/EenB05}.
This naturally leads to the question: could applying a SAT preprocessor \emph{before} symmetry detection reduce computation time, and maybe even aid in finding more symmetry?
Indeed, if applying a SAT preprocessor were to improve symmetry detection in any way, then this could be considered a win-win situation: SAT preprocessing techniques are usually applied anyway and hence the resulting improvement would be virtually free.
Unfortunately, after conducting simple testing (see Table~\ref{tab:php}) we find that applying an off-the-shelf preprocessor 
(e.g., \cite{DBLP:conf/sat/SoosNC09}) 
\emph{before} dealing with symmetry (e.g., \cite{DBLP:conf/sat/Devriendt0BD16}) does not work for symmetry detection.
The test suggests that even in the basic case of the pigeonhole principle, the preprocessor \emph{removes} symmetries in a way that renders the subsequent symmetry breaking ineffective.

On the other hand, preprocessing can actually also lead to \emph{more} symmetry:
consider the CNF formula $(x) \wedge (\n{x} \vee a \vee c) \wedge (b \vee c)$.
Without any alteration, the formula has no non-trivial symmetries.
However, when we apply, say, the unit rule on $x$, the formula becomes $(a \vee c) \wedge (b \vee c)$.
This in turn makes $a$ and $b$ symmetrical.
Indeed, in this example, simplifying the formula allows us to detect more symmetry.

Confusingly, we thus find that SAT preprocessing can both lead to less, as well as more symmetry in formulas.
This raises several questions.
What is the right order of operations? 
Can we choose and schedule SAT preprocessing techniques in such a manner, that we both increase interesting symmetry while making the formula easier for symmetry detection?
Overall, it seems that a fundamental understanding of the effect of preprocessing techniques on symmetry is required.

The ultimate goal of this line of research is to develop symmetry detection for SAT that maximizes detected symmetries while minimizing computation time.

	\xparagraph{Contribution.}
	We improve symmetry detection of CNF formulas through the use of \emph{adapted} SAT preprocessing: we reduce the time needed for state-of-the-art symmetry detection and breaking, while also uncovering hidden symmetry not detectable in the unprocessed formulas.

	In particular, our main contribution is to provide a theoretical framework that captures how symmetries of formulas before and after applying CNF transformations are related.  
	Among other results, the most important property we consider is whether transformations are ``symmetry-preserving'': 
	here, we demand that all applicable symmetries of the original formula are also symmetries of the reduced formula (see Section~\ref{sec:transformations_and_symmetry} for a formal definition).
	This means, when applying a symmetry-preserving transformation, we only have to compute symmetries of the reduced formula.
	This in turn allows us to categorize a selection of quintessential SAT preprocessing techniques  (see first column of Table~\ref{fig:results}).
	For techniques that turn out to not be symmetry-preserving, we provide tailored restrictions to rectify this.

	The novelty of our approach lies in the fact that we exploit \emph{SAT techniques} to improve \emph{symmetry detection}: we depart the conventional view of treating symmetry detection as a black-box, 
	presenting a new application-specific approach to symmetry detection in SAT. 

	\xparagraph{Theoretical Framework.}
	On the theoretical side, the first challenge is to define formal notions describing the effect of CNF transformations on symmetry.
	We identify three main properties that seem particularly interesting. 
	These properties describe how symmetries behave across the different directions of the transformations.
	In the following, we want to give a brief intuition for these properties.
	For the formal definitions see Section~\ref{sec:transformations_and_symmetry}. 
	Let $F$ denote a CNF formula that is then transformed into $F'$ (using the transformation in question, e.g., applying the unit rule). We demand that $\Var(F') \subseteq \Var(F)$, where $\Var$ denotes the set of variables of a formula.
	\begin{enumerate}
	\item 
	If a transformation is \emph{symmetry-preserving (SP)}, then any syntactic 
	symmetry of $F$ is also a
	syntactic 
	symmetry of $F'$, when restricted to the reduced set of variables $\Var(F')$.
	When applying the transformation, it therefore suffices to compute symmetries of $F'$. 

	Regarding the particular example in Table~\ref{tab:php}, we want to remark that this property indeed guarantees that a transformation must either preserve \emph{all} the symmetries of the pigeonhole principle instances, or reduce \emph{all} variables at once, in which case deciding SAT becomes trivial.

	\item 
	If a transformation is \emph{weakly symmetry-preserving (WSP)}, then it is possible to restrict syntactic 
	symmetries of $F$ using group-theoretic algorithms to \emph{semantic}
	symmetries of transformed formulas of $F'$.
	This allows for a manual collection of symmetries on reduced formulas, while not guaranteeing an automatic preservation of all the symmetries.  

	\item 
	If a transformation is \emph{symmetry-lifting (SL)}, then semantic symmetries of $F'$ are also semantic symmetries of the original formula $F$.
	This means that the transformation can be used to potentially find more symmetries of the original formula.
	This seems particularly useful for model counting and enumeration tasks, i.e., applications in which all solutions of the original formula are of interest. In other words, this enables us to potentially find more symmetries of $F$ through the transformation, without actually having to apply the transformation to $F$.
	\end{enumerate}
	The results of our theoretical analysis are summarized in Table~\ref{fig:results}.
	For techniques that turn out to not be symmetry-preserving, we give new, restricted variants that are. 

	\xparagraph{Practical Evaluation.}
	On the practical side, armed with these new insights, we analyze the effect of symmetry-preserving SAT preprocessing techniques on symmetry detection.
	We evaluate symmetry detection on instances from the main track of the SAT competition 2021 \cite{satComp2021}.
	We simplify the formulas using only the aforementioned symmetry-preserving transformations.
	It then turns out that even when only applying this subset of simplifications, computation times of state-of-the-art symmetry detection and breaking algorithms are decreased substantially.
	In terms of the type of symmetry detected, we observe that indeed, preprocessing is frequently able to uncover symmetry that is hidden in the unprocessed instance.
	Moreover, it turns out that unprocessed instances often contain a substantial number of symmetries that exclusively operate on variables that can be reduced away through preprocessing techniques.
	\begin{table}[t]
		\centering
		\begin{tabular}{|l|r|r|r|}
			\hline
			transformation & \footnotesize SP &  WSP &\footnotesize SL \\\hline
			subsumption$^1$ & $\checkmark$ & \checkmark & $\checkmark$ \\
			self-subsumption & \xmark & \checkmark & $\checkmark$\\
			simultaneous self-subsumption$^2$ & \checkmark & \checkmark & $\checkmark$\\
			adding learned clauses & \xmark & \checkmark & \checkmark  \\
			unit$^1$ & $\checkmark$ & \checkmark & $\checkmark$ \\
			pure$^1$ & $\checkmark$  & \checkmark &  \xmark \\
			blocked clause elimination$^1$ & $\checkmark$  & \checkmark &  \xmark \\
			bounded variable elimination  & \xmark  & \checkmark &  \xmark \\
			symmetric variable elimination$^2$  & \checkmark  & \checkmark &  \xmark \\
			\hline
		\end{tabular}
		\caption{The table shows whether symmetries are preserved under applying a given CNF transformation (SP and WSP) and whether symmetries of a transformed formula lift back to the original one (SL). $^1$ indicates that it is assumed that the transformation is applied exhaustively. $^2$ denotes new, symmetry-preserving variants defined in Section~\ref{sec:transformations_and_symmetry}.
		} \label{fig:results}
	\end{table}%

\section{Syntactic and Semantic Symmetry} \label{sec:syntactic_and_semantic_symmetry}
We begin by introducing some general notation used in SAT solving. 
This allows us to discuss the notions of syntactic and semantic symmetries, which we will make use of in this paper. 

\xparagraph{SAT.} 
Given a variable $v$, we define the corresponding \emph{literals} $v$ and $\n{v}$, the latter denoting negation.
A SAT instance $F$ is commonly given in \emph{conjunctive normal form} (CNF), i.e., a conjunction of disjunctions of literals, meaning that it is in the form
$F = \bigwedge_{i \in \{1, \dots{}, m\}} \bigvee_{j \in \{1, \dots{}, k_i\}} l_{i,j}$
where 
	each disjunction of literals is called a \emph{clause},
	$k_i$ is the number of literals in clause $i$
	and $l_{i, j}$ is the $j$-th literal in the $i$-th clause.
\noindent In this paper, we use the common representation where a formula in CNF is a set of clauses and clauses are sets of literals, without explicitly encoding the conjunctions and disjunctions. 
The instance above thus becomes 
$F = \{\{l_{1,1},\dots{},l_{1,k_1}\}, \dots{}, \{l_{m,1},\dots{},l_{m,k_m}\}\}$.

We denote with $\Var(F) = \{v_1, \dots{}, v_n\}$ the set of variables of $F$. 
We implicitly use the fact that $\n{\n{v}} = v$ whenever possible:
given a literal $l = \n{v}$ we may write $\n{l} = \n{\n{v}} = v$. 
We never distinguish between $\n{\n{v}}$ and $v$.
The set of literals of a formula $F$ is denoted by $\Lit(F) := \Var(F) \cup \{\n{v} \; | \; v \in \Var(F)\}$.

An \emph{assignment} of $F$ is a function $\sigma : L \to \{\bot, \top\}$ where $L \subseteq \Lit(F)$. An assignment must always act consistently on the literals of a variable, meaning that $\sigma(v) = \top$ if and only if $\sigma(\n{v}) = \bot$ holds.
We call an assignment \emph{complete} whenever $L = \Lit(F)$ and \emph{partial} otherwise. 
Let $v \in \Var(F)$. 
Abusing notation, we may write that $l \in \sigma$ whenever $\sigma(l) = \top$, and $l \notin \sigma$ otherwise. 
Note that if $l \notin \sigma$ and $\n{l} \notin \sigma$, then $l \notin L$.

We can simplify a formula $F$ with respect to $\sigma$:
$F[\sigma] := \{C[\sigma] \;|\; C \in F \wedge \nexists l \in \Lit(F): (l \in \sigma \wedge l \in C)\}$ with
$C[\sigma] := \{l \;|\; l \in C \wedge \n{l} \notin \sigma\}$.
If $F[\sigma] = \{\}$ we call $\sigma$ a \emph{satisfying assignment}, whereas if $\{\}\in F[\sigma]$ we call $\sigma$ a \emph{conflicting assignment}. 
A complete assignment is either satisfying or conflicting.

Let $C_1$ and $C_2$ denote two clauses with $x \in C_1$ and $\n{x} \in C_2$. 
We denote with $C_1 \circ_x C_2 = (C_1 \smallsetminus \{x\}) \cup (C_2 \smallsetminus \{\n{x}\})$ the resolvent of $C_1$ and $C_2$ on variable $x$. 

\xparagraph{Syntactic Symmetry.} 
We introduce the notion of \emph{syntactic symmetries} of a CNF formula $F$. 
Consider bijections $\varphi: \Lit(F) \to \Lit(F)$ mapping literals to literals.
A bijection $\varphi$ naturally lifts to clauses, formulas and assignments, by applying $\varphi$ element-wise to these objects. 
Syntactic symmetries have two defining properties:
(1) The formula is mapped back to itself, meaning $\varphi(F) = F$ holds.
(2) For all $l \in \Lit(F)$ it holds that $\n{\varphi(l)} = \varphi(\n{l})$, meaning $\varphi$ also induces a permutation of the variables. 
If the above requirements are met, we call $\varphi$ a syntactic symmetry or automorphism of $F$. These symmetries form a permutation group under composition. We denote this group of automorphisms of $F$ by $\Aut_{\syn}(F)$. 

By modelling a CNF formula as a graph (and vice versa), it can be shown that computing all syntactic symmetries is polynomial-time equivalent to computing the automorphism group of a graph and hence to the graph isomorphism problem \cite{Crawford1992ATA}, also known as
\GI-complete.

Therefore, syntactic symmetries can indeed be computed using tools for graph isomorphism (for modern tools see, e.g., \cite{McKay201494,DBLP:conf/esa/AndersS21, Darga:2004:ESS:996566.996712}). 

\xparagraph{Semantic Symmetry.} Let us again consider bijections $\varphi: \Lit(F) \to \Lit(F)$.
We define the notion of a \emph{semantic symmetry} $\varphi$ as follows:  
(1) For all complete variable assignments $\sigma$ of $F$ it holds that $F[\sigma] = \varphi(F)[\sigma]$. 
(2) For all $l \in \Lit(F)$ it holds that $\n{\varphi(l)} = \varphi(\n{l})$. 
Note that the second property is the same property as for syntactic symmetries.

We may apply a symmetry $\varphi$ on a complete assignment $\sigma$: we let $\varphi(\sigma)$ denote the complete assignment with
$\varphi(\sigma)(v) := \sigma(\varphi(v))$.
We can interchange applying any semantic symmetry $\varphi$ to a formula $F$ or a corresponding complete assignment $\sigma$, i.e., $\varphi(F)[\sigma] = F[\varphi(\sigma)]$.

The set of all semantic symmetries indeed also forms a group under composition as well: 
\begin{lemma} The set of all semantic symmetries forms a group under composition. \label{lem:semsym_is_group}
\end{lemma}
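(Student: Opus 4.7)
The plan is to verify the four group axioms. Associativity is inherited from composition of functions, and the identity map on $\Lit(F)$ trivially satisfies both defining conditions of a semantic symmetry, so the real content is closure and inverses. The key technical tool throughout is the interchange identity $\varphi(F)[\sigma] = F[\varphi(\sigma)]$ stated just above the lemma, which in fact only requires property~(2) of $\varphi$ (so it applies to compositions and inverses as soon as we have checked (2) for those).

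For closure, let $\varphi, \psi$ be semantic symmetries. I would first verify property~(2) for $\varphi \circ \psi$ by the direct computation
\[
\overline{(\varphi\circ\psi)(l)} = \overline{\varphi(\psi(l))} = \varphi\bigl(\overline{\psi(l)}\bigr) = \varphi(\psi(\overline{l})) = (\varphi\circ\psi)(\overline{l}).
\]
Then, for any complete assignment $\sigma$, I chain the interchange identity with the semantic-symmetry condition of each of $\varphi$ and $\psi$:
\[
(\varphi\circ\psi)(F)[\sigma] = \varphi(\psi(F))[\sigma] = \psi(F)[\varphi(\sigma)] = F[\varphi(\sigma)] = \varphi(F)[\sigma] = F[\sigma].
\]
Here the second equality uses interchange for $\varphi$ (acting on the formula $\psi(F)$, which is still defined on $\Lit(F)$), the third uses that $\psi$ is a semantic symmetry evaluated at the complete assignment $\varphi(\sigma)$, the fourth uses interchange for $\varphi$ in the other direction, and the last uses that $\varphi$ is a semantic symmetry.

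For inverses, given a semantic symmetry $\varphi$, property~(2) transfers to $\varphi^{-1}$ by substituting $l \mapsto \varphi^{-1}(l)$ into $\overline{\varphi(l)} = \varphi(\overline{l})$ and applying $\varphi^{-1}$. For property~(1), I would use interchange to write $\varphi^{-1}(F)[\sigma] = F[\varphi^{-1}(\sigma)]$ and then specialize the semantic-symmetry condition of $\varphi$ to the assignment $\varphi^{-1}(\sigma)$, giving $F[\varphi^{-1}(\sigma)] = \varphi(F)[\varphi^{-1}(\sigma)] = F[\sigma]$ after one more use of interchange.

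The only mild subtlety, which I would highlight in the proof, is making sure the interchange identity is available at each step: it is really a consequence of property~(2) alone, so it continues to hold once we have verified (2) for $\varphi\circ\psi$ and $\varphi^{-1}$. Beyond this bookkeeping there is no serious obstacle; the lemma is essentially a careful chase of definitions.
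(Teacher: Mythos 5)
Your proof is correct and uses the same key step as the paper: establishing closure by chaining the interchange identity $\varphi(F)[\sigma]=F[\varphi(\sigma)]$ with the semantic-symmetry condition evaluated at a permuted assignment. You are in fact more thorough than the paper, which only verifies closure (implicitly relying on finiteness of $\SymL(\Lit(F))$ for inverses); your explicit checks of property~(2) for compositions and of inverses, and your observation that the interchange identity depends only on property~(2), are all sound.
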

\begin{proof}
	Consider two semantic symmetries $\varphi, \varphi'$ of a formula $F$. 
	We show that $\varphi\circ\varphi'$ is a semantic symmetry of $F$ as well. 
	By definition, for any complete variable assignment $\sigma$ of $F$, it holds that $F[\sigma] = \varphi(F)[\sigma] = \varphi'(F)[\sigma]$.  
	But since the definition of semantic symmetry requires that this holds for any $\sigma$, it in particular also holds for any $\varphi(\sigma)$. 
	Hence, $F[\sigma] = \varphi'(F)[\sigma] = F[\varphi'(\sigma)] = \varphi(F)[\varphi'(\sigma)] = \varphi'(\varphi(F))[\sigma] = \varphi\circ\varphi'(F)[\sigma]$.
	\end{proof}
We denote the permutation group of all semantic symmetries of $F$ as $\Aut_{\sem}(F)$. 

In order to streamline notation, we introduce the notion of a ``symmetric group'' on a set of literals $L$. 
We want to be able to denote the symmetric group, but with the restriction $\n{\varphi(l)} = \varphi(\n{l})$, as is required for syntactic and semantic symmetries. 
We therefore denote $\SymL(L) := \{\varphi \in \Sym(L) \;|\; \forall l \in L: \n{\varphi(l)} = \varphi(\n{l})\}$ (where $\Sym(L)$ denotes the symmetric group on the domain $L$).
We also give an alternative version for a set of literals $V$ where there is no $l \in V$ with $\n{l} \in V$, which disallows negation symmetries: 
$\SymV(V) := \{\varphi' \; | \; \varphi \in \Sym(V)\}$, where 
$\varphi'(l) := \varphi(l)$ if $l \in V$, and $\varphi'(l) := \n{\varphi(\n{l})}$ otherwise.

It is easy to see that all syntactic symmetries are also semantic symmetries, i.e., $\Aut_{\syn}(F) \subseteq \Aut_{\sem}(F)$, but the reverse does not hold in general \cite{DBLP:series/faia/Sakallah09}. 
As an example note that for any unsatisfiable formula $F$, the semantic symmetry group is actually $\SymL(\Lit(F))$.
Unsurprisingly, computing semantic symmetries is \NP-hard.

\xparagraph{Domain Change.}  Next, we define tools to compare groups on different sets of literals.
This is needed whenever we are, for example, comparing the symmetries of a reduced version of a formula to the original formula. 
To enable this, we define a way to reduce or lift the (finite) domain of a given group. 
Let $\Gamma$ be a permutation group on the domain $\Omega$, and let $\Omega'$ be a set such that $\Omega \subseteq \Omega'$.
We then define $\Gamma\Slift^{\Omega'} := \{\varphi\Slift^{\Omega'} \;|\; \varphi \in \Gamma\}$ where $\varphi\Slift^{\Omega'}: \Omega' \to \Omega'$ with
$\varphi\Slift^{\Omega'}(x) := \varphi(x) \text{ if } x \in \Omega$ and $\varphi\Slift^{\Omega'}(x) := x$ else.
Intuitively, we just extend every permutation on $\Omega$ with the identity on $\Omega' \smallsetminus \Omega$.

Analogously, to reduce the domain, let $\Omega' \subseteq \Omega$. 
We first define the \emph{setwise stabilizer} $\Gamma_{\{\Omega'\}} := \{\varphi \;|\; \varphi \in \Gamma \wedge \varphi(\Omega') = \Omega'\}$.
We further define $\Gamma\Sred_{\Omega'} := \{\varphi|_{\Omega'} \;|\; \varphi \in \Gamma_{\{\Omega'\}}\}$, i.e., first taking the setwise stabilizer fixing $\Omega'$ and subsequently reducing the domain to $\Omega'$.
Lastly, we also recall the \emph{pointwise stabilizer} $\Gamma_{(\Omega')} := \{\varphi \in \Gamma \;|\; \forall p \in \Omega': \varphi(p) = p\}$, fixing all points of $\Omega'$ individually.

\section{Transformations and Symmetry} \label{sec:transformations_and_symmetry}
We now focus on the main objective of this paper:
we analyze the relationship of CNF transformations and symmetry.
In order to do this systematically, we first describe a prototype for all the transformations that we consider.
A \emph{transformation} $\Pi$ defines for a formula $F$ a set of formulas $\Pi(F) = \{F_1, \dots{}, F_m\}$, where for each formula $F_i$ it holds that $\Var(F_i) \subseteq \Var(F)$ ($i \in \{1, \dots{}, m\}$).
We denote $F \xrightarrow{\Pi} F'$ whenever $F' \in \Pi(F)$, i.e., we \emph{apply} $\Pi$ to $F$.
Naturally, a transformation can be non-deterministic: at any given point, potentially, we could transform $F$ into any of the formulas of $\Pi(F)$.
We write $F \xrightarrow{\Pi}^* F^*$ whenever $\Pi$ is applied exhaustively, i.e., until $\Pi(F^*) = \emptyset$.
We always assume that both the set $\Pi(F)$ as well as the sequence of rule applications is finite.

Next, we define the properties of interest. 
Firstly, we define when transformations are \emph{symmetry-preserving}. This property formalizes the notion that all symmetries of the original formula that operate on remaining literals are also syntactic symmetries of the reduced formulas.
\begin{definition}[Symmetry-preserving.] Let $\Pi$ denote a 
	transformation.
	We call $\Pi$ \emph{symmetry-preserving}, if $\Aut_{\syn}(F)_{\{\Lit(F')\}} = \Aut_{\syn}(F)$ and $\Aut_{\syn}(F)\Sred_{\Lit(F')} \leq \Aut_{\syn}(F')$ hold for all CNF formulas $F, F'$ with $F \xrightarrow{\Pi} F'$.
\end{definition}
The symmetry-preserving property indeed comprehensively guarantees preservation of syntactic symmetry on variables of $F'$.
When we apply the simplification, we only need to compute symmetries of $F'$.
Despite the name, on the removed variables $\Var(F) \smallsetminus \Var(F')$, naturally, $F'$ does not contain the symmetries of $F$: it can be the case that 
$|\Aut_{\syn}(F)\Sred_{\Lit(F')}| < |\Aut_{\syn}(F)|$.
We refer to these symmetries of $F$ as \emph{reducible symmetries}.

Reducible symmetries \emph{solely} permute literals removed by the transformation.
All symmetries that operate on remaining literals present in $F$ -- even the ones that simultaneously permute remaining literals and removed literals -- are guaranteed to be preserved in $F'$.
In fact, there even are no symmetries permuting remaining literals with removed literals to begin with (since $\Aut_{\syn}(F)_{\{\Lit(F')\}} = \Aut_{\syn}(F)$).

We want to mention that whether applying the transformation is ``desirable'' in the first place is clearly not captured by the symmetry-preserving property.
For example, applying blocked clause elimination (which is symmetry-preserving, see Section~\ref{sec:solution_alter}) can indeed make formulas more difficult to solve \cite{DBLP:journals/dam/Kullmann99}.
This in turn obscures the question whether blocked clause elimination should be applied before handling symmetry, since it is not clear whether it should be applied at all.
This discussion is however unrelated to our goal: in any case, the property guarantees that if we apply the transformation, we only need to compute symmetries on the remaining formula.

To the contrary, $ \Aut_{\syn}(F')$ can also contain additional symmetries, which do not appear in $\Aut_{\syn}(F)\Sred_{\Lit(F')}$.
We refer to these symmetries as \emph{hidden symmetries}.

Next, we define a weaker notion of the symmetry-preserving property:
\begin{definition}[Weakly Symmetry-preserving.] Let $\Pi$ denote a transformation. We call $\Pi$ \emph{weakly symmetry-preserving}, if $\Aut_{\syn}(F)\Sred_{\Lit(F')} \leq \Aut_{\sem}(F')$ holds for all CNF formulas $F, F'$ with $F \xrightarrow{\Pi} F'$.
\end{definition}
Here, symmetries do not have to be preserved syntactically, only semantically. Also, the setwise stabilizer may remove symmetries beyond reducible symmetries. 

Lastly, we also want to reconcile symmetries in the opposite direction of transformations, i.e., how symmetries of reduced formulas relate to symmetries of the original formula.
We define the \emph{symmetry-lifting} property, which formalizes the notion that symmetries of the reduced formula are symmetries of the original formula: 
\begin{definition}[Symmetry-lifting.] Let $\Pi$ denote a transformation. We call $\Pi$ \emph{symmetry-lifting}, if $\Aut_{\sem}(F')\Slift^{\Lit(F)} \leq \Aut_{\sem}(F)$ holds for all CNF formulas $F, F'$ with $F \xrightarrow{\Pi} F'$. 
\end{definition}
If the transformed formula contains syntactic symmetries that the original one does not, we therefore indeed uncover semantic symmetries of the original formula.
In particular, concerning the discussion above, this also makes the additional symmetries found independent of whether we actually want to apply the transformation or not: we can use symmetries in the original formula, without having to apply the transformation 

We now show a technical lemma that will aid us in proving that transformations are symmetry-preserving. But in order to state the lemma, we first need to recall and adapt two well-known properties: isomorphism-invariance and confluence.

We say a transformation $\Pi$ is \emph{isomorphism-invariant}, whenever for all formulas $F$, for all finite sets $V$, for all $\varphi \in \SymL(\Lit(F) \cup V \cup \{\n{v} \;|\; v \in V\})$ and for all $F' \in \Pi(F)$ it holds that $\varphi(F') \in \Pi(\varphi(F))$.
It follows that if $\Pi$ is isomorphism-invariant and $\varphi \in \Aut_{\syn}(F)$, then $F' \in \Pi(F)$ implies $\varphi(F') \in \Pi(F)$.
Intuitively, the property states that transformations are invariant under renaming of variables. The set $V$ represents variable names not present in $F$.

Furthermore, a transformation $\Pi$ is \emph{confluent} whenever the following holds for any $F$: let $F^*_1$ and $F^*_2$ denote formulas where $\Pi$ was applied exhaustively to $F$, i.e., $F \xrightarrow{\Pi}^* F^*_1$ and $F \xrightarrow{\Pi}^* F^*_2$. It then must follow that $F^*_1 = F^*_2$.
Using these two properties, we can prove our lemma:
\begin{lemma} \label{lem:con_iso_implies_sympreserve}  Let $\Pi$ be a transformation. If $\Pi$ is confluent and isomorphism-invariant, it follows that the transformation that applies $\Pi$ exhaustively is symmetry-preserving.
\end{lemma}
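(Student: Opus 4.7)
The plan is to use confluence to pin down a unique ``normal form'' $F^*$ and then use isomorphism-invariance to show that every syntactic symmetry of $F$ fixes $F^*$ setwise. Both required inclusions then fall out almost directly.

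First I would observe that since $\Pi$ is confluent and all rewrite sequences terminate, for every $F$ there is a unique formula $F^*$ with $F \xrightarrow{\Pi}^* F^*$. Thus the exhaustive transformation $\Pi^*$ is functional, and it suffices to show that for the unique $F^*$ obtained from $F$, every $\varphi \in \Aut_{\syn}(F)$ both stabilizes $\Lit(F^*)$ setwise and restricts to a syntactic symmetry of $F^*$.

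The key step is to lift isomorphism-invariance from one step to exhaustive application. By a straightforward induction on the length of the rewrite sequence $F = F_0 \xrightarrow{\Pi} F_1 \xrightarrow{\Pi} \cdots \xrightarrow{\Pi} F_k = F^*$, isomorphism-invariance of $\Pi$ yields $\varphi(F_0) \xrightarrow{\Pi} \varphi(F_1) \xrightarrow{\Pi} \cdots \xrightarrow{\Pi} \varphi(F^*)$ for any $\varphi \in \SymL(\Lit(F) \cup V \cup \{\n{v} \mid v \in V\})$ with $V$ large enough to contain the symbols used along the way. Exhaustion is preserved because $\Pi(\varphi(F^*)) = \varphi(\Pi(F^*)) = \emptyset$ by (one step of) isomorphism-invariance, so $\varphi(F) \xrightarrow{\Pi}^* \varphi(F^*)$.

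Now fix $\varphi \in \Aut_{\syn}(F)$. Since $\varphi(F) = F$, the above gives $F \xrightarrow{\Pi}^* \varphi(F^*)$, and by confluence (uniqueness of the normal form) we conclude $\varphi(F^*) = F^*$. In particular $\varphi(\Lit(F^*)) = \Lit(F^*)$, which proves the setwise stabilizer identity $\Aut_{\syn}(F)_{\{\Lit(F^*)\}} = \Aut_{\syn}(F)$. For the second inclusion, the restriction $\varphi|_{\Lit(F^*)}$ is well-defined as a bijection on $\Lit(F^*)$ (since $\varphi$ maps $\Lit(F^*)$ to itself), it inherits the negation-commuting property $\n{\varphi(l)} = \varphi(\n{l})$ from $\varphi$, and it satisfies $\varphi|_{\Lit(F^*)}(F^*) = F^*$. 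Hence $\varphi|_{\Lit(F^*)} \in \Aut_{\syn}(F^*)$, giving $\Aut_{\syn}(F) \Sred_{\Lit(F^*)} \leq \Aut_{\syn}(F^*)$.

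The main obstacle I anticipate is purely notational: one needs to be careful that the inductive extension of isomorphism-invariance stays within the stated framework (choosing $V$ appropriately so that $\varphi$ acts on all intermediate literals), and that the restriction $\varphi|_{\Lit(F^*)}$ really lives in $\SymL(\Lit(F^*))$ rather than just being a set-theoretic bijection. Once these bookkeeping points are handled, the argument is essentially a one-line confluence-plus-invariance diagram chase.
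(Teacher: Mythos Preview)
Your proposal is correct and follows essentially the same route as the paper: use confluence to get a unique normal form $F^*$, push a derivation sequence through $\varphi$ via isomorphism-invariance to obtain $\varphi(F^*)$ as another normal form of $F$, and conclude $\varphi(F^*)=F^*$. If anything, you are slightly more careful than the paper in explicitly verifying that $\varphi(F^*)$ is itself exhaustive (via $\Pi(\varphi(F^*))=\varphi(\Pi(F^*))=\emptyset$) and that the restriction $\varphi|_{\Lit(F^*)}$ lands in $\SymL(\Lit(F^*))$.
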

\begin{proof}
	Let $F^*$ denote the formula where $\Pi$ was applied exhaustively to $F$, i.e., $F \xrightarrow{\Pi}^* F^*$.
	Since $\Pi$ is confluent, $F^*$ is indeed unique.
	We need to show that the transformation $\Pi'$ that defines $F \xrightarrow{\Pi'} F^*$ is symmetry-preserving, i.e., that both $\Aut_{\syn}(F)\Sred_{\Lit(F^*)} \leq \Aut_{\syn}(F^*)$ and $\Aut_{\syn}(F)_{\{\Lit(F^*)\}} = \Aut_{\syn}(F)$ hold.

	Let $F \xrightarrow{\Pi} F'$ and $\varphi \in \Aut_{\syn}(F)$.
	Since $\varphi$ is a symmetry of $F$, it holds that $\varphi(F) = F$. 
	Let $F \xrightarrow{\Pi} F_1 \xrightarrow{\Pi} \dots{} F^*$ denote a derivation of $F^*$. It follows due to isomorphism-invariance that we may also derive $F = \varphi(F) \xrightarrow{\Pi} \varphi(F_1) \xrightarrow{\Pi} \dots{} \varphi(F^*)$. Due to confluence, we know that $F^* = \varphi(F^*)$.
	But this proves precisely that $\varphi|_{\Lit(F^*)}$ is a syntactic symmetry of $F^*$, i.e., $\varphi|_{\Lit(F^*)} \in \Aut_{\syn}(F)_{\{\Lit(F^*)\}}$.
	Since this is true for all $\varphi \in \Aut_{\syn}(F)$, it also follows that $\Aut_{\syn}(F) = \Aut_{\syn}(F)_{\{\Lit(F^*)\}}$, proving the claim.
\end{proof}

\subsection{Equivalence-Preserving Transformations} \label{sec:solution_preserve}
Let us first consider transformations that do not manipulate the set of solutions, i.e., with the property that for all $F \xrightarrow[]{\Pi} F'$ and complete assignments $\sigma$ of $F$, $F[\sigma] = F'[\sigma]$ holds. 
We call such transformations \emph{equivalence-preserving}. 
We show the following useful property.
\begin{lemma} \label{lem:variable_preserve_semantic}  Let $\Pi$ denote an equivalence-preserving transformation. Let $F, F'$ be CNF formulas with $F \xrightarrow{\Pi} F'$. 
Then, it holds both that $\Aut_{\sem}(F)\Sred_{\Lit(F')} \subseteq \Aut_{\sem}(F')$ and $\Aut_{\sem}(F')\Slift^{\Lit(F)} \subseteq \Aut_{\sem}(F)$.
\end{lemma}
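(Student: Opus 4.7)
The plan is to chain together two elementary facts: the equivalence-preservation hypothesis $F[\sigma] = F'[\sigma]$ on complete assignments $\sigma$ of $F$, and the interchange identity $\varphi(G)[\tau] = G[\varphi(\tau)]$, which in fact holds for any bijection $\varphi \in \SymL(\Lit(G))$ and any complete $\tau$ (both sides being $\{\}$ or containing $\{\}$ according to whether $\varphi(\tau)$ satisfies $G$; this matters because in the second inclusion I will need to invoke interchange for a bijection that is not yet known to be a semantic symmetry). With these two tools, both inclusions reduce to routine bookkeeping.

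For the first inclusion, I take $\varphi \in \Aut_{\sem}(F)$ stabilizing $\Lit(F')$ setwise and set $\psi := \varphi|_{\Lit(F')}$. Fix a complete assignment $\sigma'$ of $F'$ and extend it arbitrarily to a complete assignment $\sigma$ of $F$. Since $\varphi$ stabilizes $\Lit(F')$, the formula $\varphi(F') = \psi(F')$ lives on $\Lit(F')$, so $\psi(F')[\sigma'] = \varphi(F')[\sigma]$. The chain $\psi(F')[\sigma'] = \varphi(F')[\sigma] = F'[\varphi(\sigma)] = F[\varphi(\sigma)] = \varphi(F)[\sigma] = F[\sigma] = F'[\sigma] = F'[\sigma']$ then applies, in order, interchange on $F'$, equivalence-preservation, interchange on $F$, the semantic symmetry of $\varphi$ on $F$, equivalence-preservation, and the fact that $F'$ only depends on $\sigma|_{\Lit(F')} = \sigma'$.

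For the second inclusion, I take $\psi \in \Aut_{\sem}(F')$, let $\hat{\psi} := \psi\Slift^{\Lit(F)}$ be its identity extension, fix a complete assignment $\sigma$ of $F$, and set $\sigma' := \sigma|_{\Lit(F')}$. Starting from $\hat{\psi}(F)[\sigma]$, interchange yields $F[\hat{\psi}(\sigma)]$, equivalence-preservation yields $F'[\hat{\psi}(\sigma)]$, and the identity $\hat{\psi}(\sigma)|_{\Lit(F')} = \psi(\sigma')$ (since $\hat{\psi}$ agrees with $\psi$ on $\Lit(F')$ and with the identity elsewhere) rewrites this as $F'[\psi(\sigma')]$; a further interchange on $F'$, the semantic symmetry of $\psi$, and a final equivalence-preservation step close the loop to $F[\sigma]$, establishing $\hat{\psi} \in \Aut_{\sem}(F)$.

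The main care point throughout is domain bookkeeping: I need $\varphi$ to stabilize $\Lit(F')$ in the first direction so that $\varphi(F')$ still lies on $\Lit(F')$ and interchange on $F'$ is legitimate, and I need $\hat{\psi}$ to coincide with $\psi$ on $\Lit(F')$ in the second so that the restricted assignments align. Beyond these two alignment observations, no step requires creativity; the proof mechanically pushes permutations onto assignments, swaps $F$ for $F'$ via equivalence-preservation, and pulls the permutations back.
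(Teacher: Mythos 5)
Your proof is correct and follows essentially the same route as the paper's: for each complete assignment, use equivalence-preservation to swap $F$ and $F'$ and the known symmetry to move between $\sigma$ and its image, concluding via the interchange identity. You are somewhat more careful than the paper about the domain bookkeeping (setwise stabilization, restriction/extension of assignments, and the fact that interchange holds for arbitrary bijections), but the underlying argument is the same.
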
 
\begin{proof} Let $F \xrightarrow[]{\Pi} F'$. 
	The equivalence-preserving property of $\Pi$ guarantees that for all assignments $\sigma$ of $F$, it holds that $F'[\sigma] = F[\sigma]$.  
	Consider any semantic symmetry $\varphi$ of $F'$ extended by the identity, i.e., $\varphi \in \Aut_{\sem}(F')\Slift^{\Lit(F)}$.
	It follows that both $F'[\sigma] = F[\sigma]$ as well as $F'[\varphi(\sigma)] = F[\varphi(\sigma)]$ for all complete assignments $\sigma$ of $F$.
	We can conclude $F'[\varphi(\sigma)] = F'[\sigma] = F[\sigma] = F[\varphi(\sigma)]$, meaning $\varphi \in \Aut_{\sem}(F)$.
	On the other hand, using the same argument, it also follows that any semantic symmetry $\varphi \in \Aut_{\sem}(F)\Sred_{\Lit(F')}$ is a semantic symmetry of $F'$.
\end{proof}
Considering our terminology, this proves that equivalence-preserving transformations are always symmetry-lifting and weakly symmetry-preserving.

\xparagraph{Subsumption.} Let us now consider the \emph{subsumption} rule as our first concrete instance of a transformation rule.
We formally define the subsumption rule:
\[\frac{C \in F \quad D \in F \quad C \subset D}{F \smallsetminus \{D\}}.\]
It is easy to see that subsumption is equivalence-preserving, thus, Lemma~\ref{lem:variable_preserve_semantic} applies.
The question remains whether subsumption is symmetry-preserving.
While it is not difficult to show that subsumption is not symmetry-preserving in general, it is so when applied exhaustively and can be concluded using Lemma~\ref{lem:con_iso_implies_sympreserve}:
\begin{lemma} \label{lem:subsumption_exhaustive} Exhaustive subsumption is symmetry-preserving.
\end{lemma}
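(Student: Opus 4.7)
The plan is to reduce the claim to Lemma~\ref{lem:con_iso_implies_sympreserve}. So I need to check that the one-step subsumption rule is both isomorphism-invariant and confluent; the lemma then gives symmetry-preservation of the exhaustive version for free.

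First I would verify isomorphism-invariance, which is essentially routine. For any literal bijection $\varphi \in \SymL(\Lit(F) \cup V \cup \{\n{v} \mid v \in V\})$, the rule premise $C \subset D$ with $C, D \in F$ translates, under $\varphi$, into $\varphi(C) \subset \varphi(D)$ with $\varphi(C), \varphi(D) \in \varphi(F)$ (since $\varphi$ acts element-wise and preserves strict set containment). The conclusion $\varphi(F \smallsetminus \{D\}) = \varphi(F) \smallsetminus \{\varphi(D)\}$ is then exactly a single subsumption step applied to $\varphi(F)$, so $F \xrightarrow{\Pi} F'$ implies $\varphi(F) \xrightarrow{\Pi} \varphi(F')$.

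The main work is confluence. My plan is to give an order-independent characterisation of the exhaustive result. Call a clause $C \in F$ \emph{minimal} if no $C' \in F$ satisfies $C' \subsetneq C$, and let $\min(F)$ denote the set of minimal clauses. I would then show that every exhaustive derivation $F \xrightarrow{\Pi}^* F^*$ satisfies $F^* = \min(F)$. For the inclusion $\min(F) \subseteq F^*$: a minimal clause of $F$ can never appear as the removed clause $D$ in any subsumption step (there is no strict subset of it in $F$, and subsumption only removes clauses, so no strict subset can appear later either); hence it survives every derivation. For the reverse inclusion $F^* \subseteq \min(F)$: suppose some non-minimal $D \in F$ survives in $F^*$. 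By finiteness of $F$ and transitivity of $\subsetneq$, there is some $C \in \min(F)$ with $C \subsetneq D$; by the previous point $C$ is still present in $F^*$, so the subsumption rule is still applicable to $F^*$, contradicting exhaustiveness. Thus $F^* = \min(F)$ regardless of the derivation, establishing confluence.

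With both properties in hand, Lemma~\ref{lem:con_iso_implies_sympreserve} immediately yields that exhaustive subsumption is symmetry-preserving. The only step I consider a real obstacle is confirming that minimal clauses are never destroyed along a derivation, but this follows directly from the fact that subsumption only removes clauses (never introduces new ones) and removal of a clause $D$ requires a strict subset of $D$ to exist in the current formula, which no minimal clause can have.
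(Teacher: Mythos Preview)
Your proposal is correct and follows the same approach as the paper: reduce to Lemma~\ref{lem:con_iso_implies_sympreserve} by checking isomorphism-invariance and confluence. The only difference is that the paper cites confluence of subsumption from the literature \cite{DBLP:journals/jair/HeuleJLSB15}, whereas you supply a direct proof via the characterisation $F^* = \min(F)$; your argument for that is sound.
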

\begin{proof} 
	We prove the claim by applying Lemma~\ref{lem:con_iso_implies_sympreserve}. 
	First of all, note that subsumption is indeed confluent \cite{DBLP:journals/jair/HeuleJLSB15}.
	It thus suffices to show that subsumption is isomorphism-invariant, which follows readily: whenever a clause $C$ is subsumed by $D \in F$, then indeed, $\varphi(C)$ is subsumed by $\varphi(D) \in \varphi(F)$ for all $\varphi \in \SymL(\Lit(F))$.
\end{proof}
Throughout the paper, we often show that applying a rule exhaustively is symmetry-preserving. We want to remark however that this is usually not a requirement and there are indeed also other ways to apply rules in a symmetry-preserving manner (e.g., in a round-based manner).

\xparagraph{Self-Subsumption.} Next, we analyze self-subsumption \cite{DBLP:conf/sat/EenB05}. 
We define self-subsumption based on \emph{self-subsuming resolution} \cite{DBLP:conf/sat/EenB05}: 
\[\frac{C_1 \cup \{x\} \quad C_2 \cup \{\n{x}\} \quad C_1 \subset C_2}{C_2}\]
Whenever we can apply self-subsuming resolution to a formula $F$, we can transform $F$ by removing $C_2 \cup \{\n{x}\}$ and adding $C_2$ (essentially removing the literal $\n{x}$ from $C_2$).
Since this simply constitutes one application of resolution and subsumption, self-subsumption is naturally equivalence-preserving.  
Thus, Lemma~\ref{lem:variable_preserve_semantic} is applicable.

Self-subsumption (also exhaustively, or in conjunction with subsumption) is neither confluent nor symmetry-preserving. Consider the formula
\[(a_1 \vee b_1 \vee \n{x_1} \vee \n{y_1}) \wedge (a_1 \vee x_1) \wedge (b_1 \vee \n{x_1} \vee y_1) \wedge (a_2 \vee b_2 \vee \n{x_2} \vee \n{y_2}) \wedge (a_2 \vee x_2) \wedge (b_2 \vee \n{x_2} \vee y_2).\] 
Clearly, $a_1$ and $a_2$ are symmetrical. Now, one way to exhaustively apply self-subsumption is
\[(a_1 \vee b_1 \vee \n{y_1}) \wedge (a_1 \vee x_1) \wedge (b_1 \vee \n{x_1} \vee y_1) \wedge (a_2 \vee b_2 \vee \n{x_2}) \wedge (a_2 \vee x_2) \wedge (b_2 \vee \n{x_2} \vee y_2).\]  
In this reduced formula however, $a_1$ and $a_2$ are not symmetrical anymore.

\xparagraph{Simultaneous Self-Subsumption.} We present a way to make self-subsumption symmetry-preserving. 

We change two aspects of self-subsumption.
Firstly, we only apply self-subsumption whenever there is only one unique literal that can be removed from a clause.
Secondly, we apply all applicable self-subsumptions simultaneously, in a round-based scheme.

Let $f_F(C_1, C_2, x)$ denote whenever self-subsuming resolution with respect to literal $x$ is applicable to $C_1, C_2$ in $F$ (i.e., removing the literal $x$ from $C_2$).
We restrict self-subsuming resolution to \emph{unique self-subsuming resolution}:
\[\frac{f_F(C_1, C_2, x) \quad \forall x' \in \Lit(F), \forall C_1' \in F:  f_F(C_1', C_2, x') \implies x = x'}{C_2 \smallsetminus \{x\}}\]
If unique self-subsuming resolution is applicable, then there is only one unique literal that can be removed from $C_2$ in $F$ using self-subsumption.
Hence, we can define a deterministic function $r_F(C) := C'$, that applies unique self-subsuming resolution for $C$ in $F$, if it is applicable, and defines $C' = C$ otherwise.
Based on $r$, we now apply the rule simultaneously for all clauses in $F$: we define $R(F) := \{r_F(C) \;|\;C \in F\}$.
This prevents rules initially applicable from deactivating one another.
By definition, $R$ is now a deterministic (and thus confluent) transformation.

We record that $R$ is still equivalence-preserving: in every application of $R$, we first perform all the resolution steps for the individual self-subsuming resolutions, followed by the subsumption steps.
We show that simultaneous self-subsumption is symmetry-preserving:
\begin{lemma} Simultaneous self-subsumption is symmetry-preserving. 
\end{lemma}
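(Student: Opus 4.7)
The plan is to invoke Lemma~\ref{lem:con_iso_implies_sympreserve}: it suffices to show that the single-step transformation $R$ is both confluent and isomorphism-invariant, and then the exhaustive application of $R$ (i.e., simultaneous self-subsumption as a reduction procedure) is symmetry-preserving. Confluence comes for free, since $R$ has already been argued to be deterministic: $r_F(C)$ is a function of $C$ and $F$, so $R(F)$ is a singleton set, and thus any two maximal derivations from $F$ trivially end at the same formula. What remains is isomorphism-invariance.

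For isomorphism-invariance, I would fix a formula $F$, a finite set of fresh variables $V$, and a permutation $\varphi \in \SymL(\Lit(F) \cup V \cup \{\n{v} \mid v \in V\})$. Since $R(F) = \{r_F(C) \mid C \in F\}$ is a singleton, showing $\varphi(R(F)) = R(\varphi(F))$ reduces to verifying that $\varphi(r_F(C)) = r_{\varphi(F)}(\varphi(C))$ for every clause $C \in F$. I would unfold the definition of $r$: either there is a unique literal $x$ such that unique self-subsuming resolution removes $x$ from $C$ in $F$ via some witness $C_1$, in which case $r_F(C) = C \smallsetminus \{x\}$, or $r_F(C) = C$ otherwise.

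The key step is to transfer the applicability predicate $f_F(C_1, C, x)$ along $\varphi$. Because $\varphi$ is a literal bijection compatible with negation, the conditions $C_1 \cup \{x\} \in F$, $C \cup \{\n{x}\} \in F$, and $C_1 \subset C$ are preserved and reflected by $\varphi$: they hold iff $\varphi(C_1) \cup \{\varphi(x)\} \in \varphi(F)$, $\varphi(C) \cup \{\n{\varphi(x)}\} \in \varphi(F)$, and $\varphi(C_1) \subset \varphi(C)$. Thus $f_F(C_1, C, x) \iff f_{\varphi(F)}(\varphi(C_1), \varphi(C), \varphi(x))$. The uniqueness clause in the definition of unique self-subsuming resolution transports along the same equivalence (since $\varphi$ is a bijection on $\Lit(F)$ extended appropriately), so the unique removable literal of $\varphi(C)$ in $\varphi(F)$ is exactly $\varphi(x)$. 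Hence $r_{\varphi(F)}(\varphi(C)) = \varphi(C) \smallsetminus \{\varphi(x)\} = \varphi(C \smallsetminus \{x\}) = \varphi(r_F(C))$, and in the non-applicable case $r_{\varphi(F)}(\varphi(C)) = \varphi(C) = \varphi(r_F(C))$.

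The main obstacle I anticipate is not conceptual but notational: carefully pushing $\varphi$ through the uniqueness quantifier in the premise of unique self-subsuming resolution, and making sure that the witness clause $C_1$ on the $F$-side corresponds via $\varphi$ to an allowed witness on the $\varphi(F)$-side (and vice versa). Once that transport is established, the lemma follows directly by appeal to Lemma~\ref{lem:con_iso_implies_sympreserve}.
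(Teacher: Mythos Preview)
Your proposal is correct, and the core insight---that the predicate $f_F(C_1,C,x)$ and its uniqueness condition transport along any $\varphi \in \SymL$, so that $r_{\varphi(F)}(\varphi(C)) = \varphi(r_F(C))$---is exactly what the paper uses as well.

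The packaging differs slightly. The paper verifies the two symmetry-preserving conditions directly for a \emph{single} application $F \xrightarrow{R} F'$: taking $\varphi \in \Aut_{\syn}(F)$, it argues that if $x$ is the unique removable literal of $C$ in $F$ then $\varphi(x)$ is that of $\varphi(C)$, which immediately gives $\varphi(F') = F'$ and hence both $\Aut_{\syn}(F)\Sred_{\Lit(F')} \leq \Aut_{\syn}(F')$ and $\Aut_{\syn}(F)_{\{\Lit(F')\}} = \Aut_{\syn}(F)$. You instead establish full isomorphism-invariance of $R$ (a slightly stronger statement, quantifying over all $\varphi$, not just automorphisms of $F$) and invoke Lemma~\ref{lem:con_iso_implies_sympreserve}. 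Note that Lemma~\ref{lem:con_iso_implies_sympreserve} as stated concludes symmetry-preservation for the \emph{exhaustive} application of $R$, whereas the paper's lemma concerns a single round; your isomorphism-invariance argument in fact already yields the single-step conclusion directly (since $\varphi(F)=F$ gives $\varphi(R(F))=R(\varphi(F))=R(F)$), so the detour through Lemma~\ref{lem:con_iso_implies_sympreserve} is harmless but not strictly needed. Either route works; the paper's is marginally more direct, yours is more uniform with the other proofs in the section.
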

\begin{proof}
	Let $\varphi \in \Aut_{\syn}(F)$ and $F \xrightarrow{R} F'$ (slightly abusing notation).
	We prove $\Aut_{\syn}(F)\Sred_{\Lit(F')} \subseteq \Aut_{\syn}(F')$ holds.
	Let $C$ denote a clause where a literal $x$ can be removed through unique self-subsuming resolution.
	It immediately follows from syntactic symmetry, that the same holds true for $\varphi(C)$ and $\varphi(x)$.

	It remains to be shown that $\Aut_{\syn}(F) = \Aut_{\syn}(F)_{\{\Lit(F')\}}$.
	Again, note that if a literal $x$ is removed exhaustively from $F$ in $F'$, then by symmetry, the same holds true for $\varphi(x)$.
\end{proof}

\xparagraph{Learning clauses.} 
Next, we briefly consider learning clauses in a CDCL solver. The transformation adds clauses to $F$ that can be derived from $F$ using resolution.
It is easy to see this transformation is indeed equivalence-preserving, and thus Lemma~\ref{lem:variable_preserve_semantic} applies.

However, learning clauses is not symmetry-preserving. 
Consider $(x \vee a) \wedge (\n{x} \vee a) \wedge (x \vee b) \wedge (\n{x} \vee b)$.
In the formula, $a$ and $b$ are syntactically symmetrical.
But if we derive and add, e.g., the clause $(a)$, $a$ and $b$ are not symmetrical anymore.

We could ensure a symmetry-preserving transformation by, e.g., always either adding all symmetrical learned clauses, or only adding clauses on asymmetrical variables (analogous to the technique that we will apply to variable elimination later on in Section~\ref{sec:solution_alter}). 
Also, although impractical, adding learned clauses exhaustively is symmetry-preserving as well.

\subsection{Equisatisfiability-Preserving Transformations} \label{sec:solution_alter}
We now turn our attention to rules that may alter the set of solutions, i.e., transformations that are only \emph{equisatisfiability-preserving}. 
Since the reduced formulas change the set of satisfying and conflicting assignments, we need to adjust our expectations for the resulting symmetries. 
Indeed, Lemma~\ref{lem:variable_preserve_semantic} becomes incompatible.
This means a transformation might now indeed not be symmetry-lifting, and not weakly symmetry-preserving.

\xparagraph{Unit.} 
We analyze the unit literal rule. 
The unit literal rule requires that there is a unit clause $\{l\} \in F$ and consequently reduces the formula to $F[l \mapsto \top]$.

Since unit can lead to conflicts, we add a \emph{conflict rule}, such that if $\{\} \in F$, we can reduce $F$ to $F' = \{\{\}\}$.
In case of a conflict, we thus reduce the formula to a unique conflicting formula. 
We record that with the conflict rule, the rules are confluent \cite{DBLP:journals/jair/HeuleJLSB15}.
Indeed, we can now prove the rules to be both symmetry-preserving and symmetry-lifting.

\begin{lemma} Exhaustive unit and conflict is symmetry-preserving\label{lem:unit_exhaustive} and symmetry-lifting. \label{lem:unit_lifting}
\end{lemma}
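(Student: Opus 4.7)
The plan has two parts, one for each property in the conclusion.

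\textbf{Symmetry-preserving part.} I would invoke Lemma~\ref{lem:con_iso_implies_sympreserve} and reduce the task to verifying confluence and isomorphism-invariance of the rule system $\{$unit, conflict$\}$. Confluence is already recorded just before the statement (citing Heule et al.), so only isomorphism-invariance remains. For the unit rule, the premise is that $\{l\} \in F$, in which case the rule yields $F[l \mapsto \top]$; for any $\varphi \in \SymL$, $\{\varphi(l)\} \in \varphi(F)$, and applying unit there produces $\varphi(F)[\varphi(l) \mapsto \top]$, which is precisely $\varphi(F[l \mapsto \top])$ because $\varphi$ commutes with simplification by a literal. The conflict rule is even simpler: $\{\} \in F$ iff $\{\} \in \varphi(F)$, and in both cases the result is $\{\{\}\}$, which is fixed by every $\varphi$. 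Combined, the exhaustive transformation inherits symmetry-preservation from Lemma~\ref{lem:con_iso_implies_sympreserve}.

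\textbf{Symmetry-lifting part.} Here I cannot appeal to Lemma~\ref{lem:variable_preserve_semantic}, because the unit rule is only equisatisfiability-preserving, not equivalence-preserving. The key observation is that exhaustive unit propagation determines a unique partial assignment $\tau$ on $\Var(F) \smallsetminus \Var(F^*)$ such that $F^* = F[\tau]$, and by the soundness of unit propagation every satisfying assignment of $F$ extends $\tau$. Equivalently, for every complete assignment $\sigma$ of $F$ either (i) $\sigma$ extends $\tau$ and then $F[\sigma] = F^*[\sigma|_{\Lit(F^*)}]$, or (ii) $\sigma$ disagrees with $\tau$ on some literal and then $F[\sigma] = \{\{\}\}$ (the same holds trivially if unit propagation hits a conflict, in which case $F^* = \{\{\}\}$ and $\Lit(F^*) = \emptyset$).

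Now let $\varphi \in \Aut_{\sem}(F^*)$ and let $\bar\varphi := \varphi\Slift^{\Lit(F)}$, which fixes every literal of $\Var(F)\smallsetminus\Var(F^*)$. Using the identity $\bar\varphi(F)[\sigma] = F[\bar\varphi(\sigma)]$ recalled in the preliminaries, it suffices to show $F[\sigma] = F[\bar\varphi(\sigma)]$ for every complete $\sigma$ of $F$. Because $\bar\varphi$ is the identity on the variables assigned by $\tau$, the values $\sigma$ and $\bar\varphi(\sigma)$ assign to those variables coincide, so $\sigma$ extends $\tau$ iff $\bar\varphi(\sigma)$ extends $\tau$. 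In case~(ii), both sides equal $\{\{\}\}$. In case~(i), $\bar\varphi(\sigma)|_{\Lit(F^*)} = \varphi(\sigma|_{\Lit(F^*)})$ and the semantic-symmetry property of $\varphi$ on $F^*$ gives $F^*[\sigma|_{\Lit(F^*)}] = F^*[\varphi(\sigma|_{\Lit(F^*)})]$, so the two values $F[\sigma]$ and $F[\bar\varphi(\sigma)]$ agree as well. This establishes $\bar\varphi \in \Aut_{\sem}(F)$.

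\textbf{Main obstacle.} The only non-routine step is the symmetry-lifting direction. Unlike the equivalence-preserving transformations of Section~\ref{sec:solution_preserve}, unit genuinely changes $F[\sigma]$ for assignments that clash with $\tau$, so one must explicitly use the soundness of unit propagation to classify those $\sigma$ and then argue that $\bar\varphi$ respects the classification. The role of the conflict rule here is to guarantee the uniqueness of $F^*$ (and hence a well-defined $\tau$), which is what lets the case analysis go through cleanly.
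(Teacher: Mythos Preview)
Your proposal is correct and follows essentially the same approach as the paper: for symmetry-preservation you invoke Lemma~\ref{lem:con_iso_implies_sympreserve} via confluence and isomorphism-invariance, and for symmetry-lifting you perform the same case split on whether $\sigma$ agrees with the propagated unit literals, using soundness of unit propagation to handle the disagreement case. Your treatment of the lifting direction is more carefully spelled out than the paper's (which is rather terse), but the underlying argument is identical.
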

\begin{proof}
\textit{(Symmetry-preserving.)} 
	Since unit and conflict are confluent, it suffices to show that unit and conflict are isomorphism-invariant to apply Lemma~\ref{lem:con_iso_implies_sympreserve}: whenever a literal $l$ is unit in $F$ and we reduce $F$ to $F[l \mapsto \top]$, then indeed, $\varphi(l)$ is unit in $\varphi(F)$ and we may reduce it to $\varphi(F)[\varphi(l) \mapsto \top] = \varphi(F[l \mapsto \top])$ for all $\varphi \in \SymL(\Lit(F))$.

\textit{(Symmetry-lifting.)} Let $F \xrightarrow{\Pi} F'$ (where $\Pi$ denotes unit and conflict).
We prove $\Aut_{\sem}(F') \subseteq \Aut_{\sem}(F)$. 
Let $\varphi \in \Aut_{\sem}(F')$ and $\sigma$ be a complete assignment of $F$.
Furthermore, let $\unit'(F)$ denote the unit literals assigned in $F'$. 
If $\sigma$ does not assign all literals in $\unit'(F)$ positively, $F[\sigma]$ is unsatisfiable and the permutations of $\varphi$ clearly have no effect on this. 
Hence, we may assume that all unit literals are assigned correctly in $\sigma$. 
Since this reduces the formula to $F'$, symmetries of $F'$ now apply.
\end{proof}

\xparagraph{Pure.}
The pure literal rule requires that there is a literal $l \in \Lit(F)$ such that for every clause $C \in F$ it holds that $\n{l} \notin C$. We may then assign $F[l \mapsto \top]$. Formally, we define 
\[\frac{l \in \Lit(F) \quad \forall C \in F: \n{l} \notin C}{F[l \mapsto \top]}.\]
We again consider the transformation that applies pure exhaustively:
\begin{lemma} \label{lem:pure_exhaustive} Exhaustive pure is symmetry-preserving.
\end{lemma}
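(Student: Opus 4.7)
The plan is to apply Lemma~\ref{lem:con_iso_implies_sympreserve}, exactly as was done for exhaustive subsumption and exhaustive unit+conflict. So I need to verify two things about the pure literal rule: that applying it exhaustively is confluent, and that a single application is isomorphism-invariant. Once both are in hand, the lemma follows immediately.

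For isomorphism-invariance, the argument is short: suppose $l \in \Lit(F)$ is pure in $F$, i.e., $\n{l} \notin C$ for every $C \in F$. For any $\varphi \in \SymL(\Lit(F) \cup V \cup \{\n{v} \mid v \in V\})$, the literal $\varphi(l)$ satisfies $\n{\varphi(l)} = \varphi(\n{l}) \notin \varphi(C)$ for every $\varphi(C) \in \varphi(F)$, so $\varphi(l)$ is pure in $\varphi(F)$. Moreover, setting $\varphi(l) \mapsto \top$ deletes exactly the clauses of $\varphi(F)$ that are the $\varphi$-images of clauses deleted from $F$ by $l \mapsto \top$, hence $\varphi(F)[\varphi(l) \mapsto \top] = \varphi(F[l \mapsto \top])$, which is exactly the isomorphism-invariance condition.

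For confluence, the key observation is that purity is monotone under the rule: if $l$ is pure in $F$ and $l' \neq l, \n{l}$ is another pure literal, then applying $l' \mapsto \top$ only removes clauses from $F$, and therefore cannot introduce an occurrence of $\n{l}$. So $l$ is still pure after the reduction (provided it still appears, but if it no longer appears it has already been eliminated from further consideration). Consequently, the set of pure literals present at any point along a derivation is monotonically closed under further applications, and any two exhaustive derivations set the same collection of literals to $\top$; since the effect of those assignments on $F$ is independent of order, the final formula $F^*$ is unique.

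The main obstacle will be stating the confluence argument cleanly, because one has to be a little careful about literals that ``disappear'' when their clauses are removed — such a literal is vacuously pure and its assignment has no further effect on the remaining clauses, so it can be safely ignored when comparing two exhaustive derivations. Once this is handled, both hypotheses of Lemma~\ref{lem:con_iso_implies_sympreserve} are verified and the conclusion that exhaustive pure is symmetry-preserving follows.
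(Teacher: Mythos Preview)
Your proposal is correct and follows essentially the same approach as the paper: apply Lemma~\ref{lem:con_iso_implies_sympreserve} by checking confluence and isomorphism-invariance of the pure rule. The only difference is that the paper cites confluence from the literature rather than spelling out the monotonicity argument you give, and its isomorphism-invariance check is the same one-line observation you make.
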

\begin{proof}
	We apply Lemma~\ref{lem:con_iso_implies_sympreserve} to show the claim.
	Note that pure is confluent \cite{DBLP:journals/jair/HeuleJLSB15}.
	It thus suffices to show that pure is isomorphism-invariant, which follows readily: whenever a literal $l$ is pure in $F$ and we reduce $F$ to $F[l \mapsto \top]$, then indeed, $\varphi(l)$ pure in $\varphi(F)$ and we may reduce it to $\varphi(F)[\varphi(l) \mapsto \top] = \varphi(F[l \mapsto \top])$ for all $\varphi \in \SymL(\Lit(F))$.
\end{proof}
The question remains whether pure is symmetry-lifting.
The technique used to prove unit to be symmetry-lifting strongly depended on the fact that wrongly assigning a literal immediately leads to a conflict. 
For pure literals, this is simply not true: ``wrongly'' assigning a pure literal is of course not helpful towards proving a formula satisfiable, however, a formula might still be satisfiable. Moreover, say we have two pure literals $l_1$ and $l_2$. 
It might even be the case that assigning $\n{l_1}$ makes the formula unsatisfiable, while assigning $\n{l_2}$ still leaves the formula satisfiable. 
This means pure literals can indeed be distinct on a semantic level.
Indeed, pure is not symmetry-lifting: 
\begin{corollary} \label{lem:pure_symmetry_lifting} Exhaustive pure is not symmetry-lifting.
\end{corollary}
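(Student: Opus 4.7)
My plan is to disprove the symmetry-lifting property by producing a concrete small counterexample. I propose the formula
\[ F = (p \vee a) \wedge (\bar a \vee b) \wedge (a \vee \bar b). \]
The only pure literal in $F$ is $p$ (it appears only positively); the literals $a$ and $b$ both occur in both polarities. Applying the pure rule to $p$ removes the first clause and yields $F' = (\bar a \vee b) \wedge (a \vee \bar b)$. In $F'$, each of $a$ and $b$ still occurs in both polarities, so no further pure literal exists and hence $F \xrightarrow{\text{pure}}^* F'$, as required.

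Next I would identify a symmetry of $F'$ that should fail to lift. Consider the simultaneous-negation map $\varphi$ with $\varphi(a) = \bar a$, $\varphi(b) = \bar b$ (and consequently $\varphi(\bar a) = a$, $\varphi(\bar b) = b$). This lies in $\SymL(\Lit(F'))$, and one checks $\varphi(F') = (a \vee \bar b) \wedge (\bar a \vee b) = F'$, so $\varphi \in \Aut_{\syn}(F') \subseteq \Aut_{\sem}(F')$. Extending by the identity on $\{p, \bar p\}$ yields $\varphi^\uparrow \in \Aut_{\sem}(F')\Slift^{\Lit(F)}$, which is the candidate that the symmetry-lifting property would force into $\Aut_{\sem}(F)$.

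The remaining task is to exhibit a single complete assignment $\sigma$ of $F$ witnessing $\varphi^\uparrow \notin \Aut_{\sem}(F)$. Take $\sigma$ with $p \mapsto \bot$, $a \mapsto \top$, $b \mapsto \top$: every clause of $F$ is satisfied, so $F[\sigma] = \emptyset$. However $\varphi^\uparrow(\sigma)$ sends $p, a, b$ all to $\bot$, and the clause $(p \vee a)$ is then falsified, so $F[\varphi^\uparrow(\sigma)] = \{\{\}\}$. Using $\varphi^\uparrow(F)[\sigma] = F[\varphi^\uparrow(\sigma)]$, we get $\varphi^\uparrow(F)[\sigma] \neq F[\sigma]$, so $\varphi^\uparrow$ is not a semantic symmetry of $F$ and the symmetry-lifting property fails.

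The main subtlety is choosing a counterexample where exhaustive pure does not cascade into eliminating everything: for instance, in $(p \vee a \vee b) \wedge (\bar a \vee b)$ removing $p$ would leave $a$ and $b$ purely signed, pure would reduce $F'$ to the empty set, and every lifted permutation would trivially be a semantic symmetry of $F$. Keeping $a$ and $b$ in mixed polarities within $F'$ while having $p$ strictly pure in $F$ is the structural constraint that drives the construction.
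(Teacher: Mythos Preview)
Your proof is correct and follows the same overall strategy as the paper: exhibit a concrete formula with a pure literal whose removal exposes a syntactic symmetry that fails to be a semantic symmetry of the original formula. Your counterexample is considerably more economical than the paper's---three variables and three clauses versus five variables and six clauses---and the symmetry you use (simultaneously negating $a$ and $b$) is just as easy to verify as the paper's single-literal negation $(z\bar z)$. The paper's larger formula appears to be built so that a nearly identical construction can be recycled for the blocked-clause-elimination corollary that follows; your minimal example is cleaner for the pure-literal case on its own but would need to be reworked there. One small remark: the identity $\varphi^{\uparrow}(F)[\sigma] = F[\varphi^{\uparrow}(\sigma)]$ you invoke is stated in the paper only for semantic symmetries, but it in fact holds for any $\varphi \in \SymL(\Lit(F))$ (it is a pure renaming identity), so your use of it is fine---or you could sidestep it entirely by computing $\varphi^{\uparrow}(F) = (p \vee \bar a) \wedge (a \vee \bar b) \wedge (\bar a \vee b)$ and evaluating directly under $\sigma$.
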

\begin{proof}
	Consider $F = (x \vee z) \wedge (\n{y} \vee z) \wedge (\n{y} \vee \n{z}) \wedge (y \vee b)  \wedge (y \vee a) \wedge (y \vee b) \wedge (\n{y} \vee \n{a} \vee \n{b}).$
	If we assign $F[x \mapsto \top] = (\n{y} \vee z) \wedge (\n{y} \vee \n{z}) \wedge (y \vee a) \wedge (y \vee b) \wedge (\n{y} \vee \n{a} \vee \n{b})$ ($x$ is pure), we get the syntactic symmetry $\varphi = (z\n{z})$. However, if we want $\varphi$ to be a semantic symmetry of $F$, the symmetry must also be valid in the case where we assign $x$ differently, i.e., where we have $F' := F[x \mapsto \bot] = (z) \wedge (y \vee a) \wedge (\n{y} \vee z) \wedge (\n{y} \vee \n{z}) \wedge (y \vee a) \wedge (y \vee b)  \wedge (\n{y} \vee \n{a} \vee \n{b})$.
	Clearly $F'$ is still satisfiable, but $\varphi$ is no semantic symmetry: $F'[z \mapsto \top]$ is satisfiable but $F'[z \mapsto \bot]$ is unsatisfiable.
\end{proof}

\xparagraph{Blocked Clause Elimination (BCE).} 
Let us now consider blocked clause elimination \cite{DBLP:conf/tacas/JarvisaloBH10}.
A literal $l \in C$ blocks $C$, if for every clause $C' \in F$ with $\n{l} \in C'$ it holds that $C \circ_l C'$ is a tautology.
A clause is called blocked whenever there is a literal that blocks it.
Blocked clause elimination exhaustively removes blocked literals from a formula.
Since blocked clause elimination is confluent \cite{DBLP:conf/tacas/JarvisaloBH10}, we can again consider the deterministic transformation that exhaustively applies blocked clause elimination.

We show that blocked clause elimination is not symmetry-lifting.
\begin{corollary} \label{lem:bce_symmetry_lifting} Exhaustive blocked clause elimination is not symmetry-lifting.
\end{corollary}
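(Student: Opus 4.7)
The strategy mirrors the pure-literal counterexample (Corollary~\ref{lem:pure_symmetry_lifting}): exhibit a formula $F$ where BCE removes exactly one ``asymmetric'' blocked clause, uncovering in $F'$ a semantic symmetry that interchanges variables which the removed clause distinguished. Concretely, I propose taking
\[F \;=\; (x \vee a) \wedge (a \vee b) \wedge (\n{a} \vee c) \wedge (\n{b} \vee d) \wedge (\n{c} \vee \n{d}).\]
The literal $x$ appears only positively, so $(x \vee a)$ is blocked by $x$. The remaining four clauses form a small ``cycle'' $a \to c$, $b \to d$, $\n{c} \vee \n{d}$, $a \vee b$, which is globally symmetric under the swap $a \leftrightarrow b$, $c \leftrightarrow d$, but $(x \vee a)$ is not.

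The plan has three verification steps. First, check that exhaustive BCE applied to $F$ removes only $(x \vee a)$, yielding $F' = (a \vee b) \wedge (\n{a} \vee c) \wedge (\n{b} \vee d) \wedge (\n{c} \vee \n{d})$. For each of the four remaining clauses, one lists the at most one clause containing the negation of each of its literals and checks that no pair of single resolvents is a tautology; this is routine and establishes that none of the four is blocked in $F$, and likewise that $F'$ is already stable under BCE. Second, verify that $\varphi := (a\,b)(c\,d)$ (acting consistently on negations and extended by identity to $x, \n{x}$) is a syntactic automorphism of $F'$: $(a \vee b)$ and $(\n{c} \vee \n{d})$ are fixed as sets, and $(\n{a} \vee c) \leftrightarrow (\n{b} \vee d)$. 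Hence $\varphi \in \Aut_{\syn}(F') \subseteq \Aut_{\sem}(F')$, and its lift $\varphi\Slift^{\Lit(F)}$ is well defined.

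Third, show that $\varphi\Slift^{\Lit(F)} \notin \Aut_{\sem}(F)$ by producing a witness assignment. Take $\sigma$ with $x = \bot$, $a = \top$, $b = \bot$, $c = \top$, $d = \bot$. One checks directly that $\sigma$ satisfies all five clauses of $F$, whereas $\varphi(\sigma)$, which has $x = \bot$, $a = \bot$, $b = \top$, $c = \bot$, $d = \top$, falsifies $(x \vee a)$. Equivalently, $F[\sigma] = \{\}$ but $\varphi(F)[\sigma]$ contains $\{\}$ because $\varphi$ maps $(x \vee a)$ to $(x \vee b) \notin F$, so $F[\sigma] \neq \varphi(F)[\sigma]$. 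This refutes $\varphi\Slift^{\Lit(F)} \in \Aut_{\sem}(F)$ and thereby the inclusion $\Aut_{\sem}(F')\Slift^{\Lit(F)} \leq \Aut_{\sem}(F)$ required for symmetry-lifting.

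The main obstacle is designing $F$ so that exhaustive BCE does not cascade: once a blocked clause is removed, other literals may become pure and trigger further removals that destroy the symmetric structure we want to exhibit in $F'$. Several natural candidates (for instance, taking the pure-literal example verbatim, or using a simple chain $a \equiv b$) collapse all the way to the empty formula under exhaustive BCE, leaving no non-trivial symmetry on $\Lit(F')$ to lift. The ``cycle'' structure above is the minimal stable configuration I found that keeps every remaining literal in both polarities, so that no resolvent of a candidate blocking literal becomes a tautology and BCE halts after the first removal.
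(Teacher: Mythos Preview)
Your argument is correct. All three steps check out: $(x\vee a)$ is the unique blocked clause in $F$ (via the pure literal $x$), the four remaining clauses form a BCE-stable cycle, $(a\,b)(c\,d)$ is a syntactic automorphism of $F'$, and your witness assignment $\sigma$ separates $F[\sigma]=\{\}$ from $\varphi(F)[\sigma]\ni\{\}$.

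The paper follows the same overall strategy---remove a single blocked clause containing a pure literal to expose a symmetry of $F'$ that fails semantically on $F$---but with a different concrete instance. Its formula has six variables and seven clauses, reuses the structure of the pure-literal counterexample (Corollary~\ref{lem:pure_symmetry_lifting}) almost verbatim, and the exposed symmetry is the \emph{negation} symmetry $(z\,\n{z})$; the witness is then obtained by setting $x\mapsto\bot$ and observing that $z$ becomes forced. Your example is smaller (five variables, five clauses), uses a pure variable-swap symmetry instead of a negation symmetry, and the ``cycle'' $a\to c$, $b\to d$, $\n{c}\vee\n{d}$, $a\vee b$ makes BCE-stability transparent: every literal in $F'$ occurs in exactly one polarity per clause and has exactly one opposing clause, so no resolvent can be tautological. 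The paper's reuse of the earlier counterexample is economical in exposition; your purpose-built instance is arguably cleaner as a standalone proof.
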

\begin{proof}
	We show the result in a similar fashion to Corollary~\ref{lem:pure_symmetry_lifting}.
	Consider $F = (x \vee z) \wedge (\n{y} \vee z) \wedge (\n{y} \vee \n{z}) \wedge (y \vee a) \wedge (y \vee b) \wedge (\n{a} \vee \n{b} \vee \n{c}) \wedge (c \vee \n{y}).$
	Again, we assign $F[x \mapsto \top] = (\n{y} \vee z) \wedge (\n{y} \vee \n{z}) \wedge (y \vee a) \wedge (y \vee b) \wedge (\n{a} \vee \n{b} \vee \n{c}) \wedge (c \vee \n{y})$ ($x$ is pure), yielding the syntactic symmetry $\varphi = (z\n{z})$. Also, note that BCE can not be applied further to $F[x \mapsto \top]$. 
	We consider $F[x \mapsto \bot] = (z) \wedge (\n{y} \vee z) \wedge (\n{y} \vee \n{z})\wedge (y \vee a) \wedge (y \vee b) \wedge (\n{a} \vee \n{b} \vee \n{c}) \wedge (c \vee \n{y})$, where $\varphi$ is no semantic symmetry: $F'[z \mapsto \top]$ is satisfiable but $F'[z \mapsto \bot]$ is unsatisfiable.
\end{proof}
However, it is indeed symmetry-preserving:
\begin{lemma} Exhaustive blocked clause elimination is symmetry-preserving.
\end{lemma}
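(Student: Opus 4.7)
The plan is to invoke Lemma~\ref{lem:con_iso_implies_sympreserve} in exactly the same way as the analogous proofs for exhaustive subsumption, unit, and pure. Confluence of BCE is already established in the literature and cited in the paragraph preceding Corollary~\ref{lem:bce_symmetry_lifting}, so the only thing left to verify is that BCE (the single-step rule that removes one blocked clause) is isomorphism-invariant. Once this is done, Lemma~\ref{lem:con_iso_implies_sympreserve} yields the claim immediately.

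For isomorphism-invariance, let $\varphi \in \SymL(\Lit(F) \cup V \cup \{\n{v}\mid v\in V\})$ for some finite extension $V$, and suppose a clause $C \in F$ is blocked by a literal $l \in C$. I need to show that $\varphi(C)$ is blocked in $\varphi(F)$ by $\varphi(l)$. First, $\varphi(l) \in \varphi(C)$ since $\varphi$ is applied element-wise. Now consider any clause $D' \in \varphi(F)$ with $\n{\varphi(l)} \in D'$. Since $\varphi$ is a bijection on literals that respects negation, $D' = \varphi(C')$ for a unique $C' \in F$, and $\n{\varphi(l)} = \varphi(\n{l}) \in \varphi(C')$ forces $\n{l} \in C'$. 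By the blocking hypothesis, $C \circ_l C'$ is a tautology, i.e.\ there is a variable $y$ such that both $y$ and $\n{y}$ occur in $(C\smallsetminus\{l\}) \cup (C'\smallsetminus\{\n{l}\})$. Applying $\varphi$ and using $\n{\varphi(y)} = \varphi(\n{y})$, we see that $\varphi(y)$ and $\n{\varphi(y)}$ both occur in $(\varphi(C)\smallsetminus\{\varphi(l)\}) \cup (\varphi(C')\smallsetminus\{\n{\varphi(l)}\}) = \varphi(C) \circ_{\varphi(l)} D'$, so this resolvent is a tautology as well. Hence $\varphi(C)$ is blocked by $\varphi(l)$ in $\varphi(F)$, proving isomorphism-invariance.

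With confluence and isomorphism-invariance in hand, Lemma~\ref{lem:con_iso_implies_sympreserve} delivers the conclusion that exhaustive BCE is symmetry-preserving. The only subtlety worth stating carefully in the write-up is the fact that ``being a tautology'' and ``containing $\n{l}$'' are preserved under $\varphi$ precisely because $\varphi$ is required to commute with negation (property (2) of syntactic symmetries, which is built into $\SymL$). There is no substantive obstacle; the proof is essentially bookkeeping on top of the confluence result and the template established by the earlier lemmas.
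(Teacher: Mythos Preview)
Your proposal is correct and follows exactly the same approach as the paper: invoke Lemma~\ref{lem:con_iso_implies_sympreserve}, cite confluence of BCE from the literature, and verify isomorphism-invariance by observing that if $l$ blocks $C$ in $F$ then $\varphi(l)$ blocks $\varphi(C)$ in $\varphi(F)$. You actually spell out the isomorphism-invariance argument in more detail than the paper does (the paper simply asserts it ``follows readily''), but the structure and content are identical.
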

\begin{proof} 
	We again apply Lemma~\ref{lem:con_iso_implies_sympreserve} to show the claim.
	Note that BCE is confluent \cite{DBLP:conf/tacas/JarvisaloBH10}.
	It thus suffices to show that BCE is isomorphism-invariant, which follows readily: whenever a literal $l \in C$ blocks $C$ in $F$, then $\varphi(l) \in \varphi(C)$ blocks $\varphi(C)$ in $\varphi(F)$ for all $\varphi \in \SymL(\Lit(F))$.
\end{proof}

\xparagraph{Bounded Variable Elimination (BVE).}
Next, we consider variable elimination \cite{DBLP:journals/jacm/DavisP60}, which is a crucial component of SAT preprocessors \cite{DBLP:conf/sat/EenB05}.
Let $C_x(F)$ and $C_{\n{x}}(F)$ denote the clauses containing $x$ (or $\n{x}$) of $F$. 
Variable elimination of a variable $x$ removes all clauses containing $x$, while adding $R_x(F) := \{C_1 \circ_x C_2 \;|\; C_1 \in C_x(F), C_2 \in C_{\n{x}}(F)\}$.
Overall, elimination of $x$ produces
$F' = F \cup R_x(F) \smallsetminus (C_x(F) \cup C_{\n{x}}(F)).$
Usually, variable elimination is only applied in a \emph{bounded} form (BVE).
This means that the rule is only applied whenever $F'$ is ``smaller'' than $F$. 
Different metrics for ``smaller'' are used.
One example is the number of clauses, i.e., simply checking whether $|F'| < |F|$ (see \cite{DBLP:conf/sat/EenB05}).

Let us first consider whether (exhaustive) BVE is symmetry-preserving.
Using the following example, we can show that this is not the case:
\begin{equation*}
\begin{split}
(x \vee a_1) \wedge (x \vee b_1) \wedge (\n{x} \vee c_1) \wedge (\n{x} \vee d_1) \wedge \\
(y \vee a_2) \wedge (y \vee b_2) \wedge (\n{y} \vee c_2) \wedge (\n{y} \vee d_2) \wedge \\
(x \vee A) \wedge (y \vee A) \wedge (\n{x} \vee A) \wedge (\n{y} \vee A) \wedge \\
(a_1 \vee A) \wedge (a_2 \vee A)  \wedge (b_1 \vee A) \wedge (b_2 \vee A) \wedge (c_1 \vee A) \wedge (c_2 \vee A)  \wedge (d_1 \vee A) \wedge (d_2 \vee A) \wedge (A) \wedge \\
(a_1 \vee z_1) \wedge (b_1 \vee z_1) \wedge (c_1 \vee \n{z_1}) \wedge (d_1 \vee \n{z_1}) \wedge (a_2 \vee z_2) \wedge (b_2 \vee z_2) \wedge (c_2 \vee \n{z_2}) \wedge (d_2 \vee \n{z_2}) \wedge\\
(\n{a_1} \vee \n{b_1} \vee \n{c_1}) \wedge (\n{b_1} \vee \n{c_1} \vee \n{d_1}) \wedge (\n{a_1} \vee \n{c_1} \vee \n{d_1}) \wedge \\
(\n{a_2} \vee \n{b_2} \vee \n{c_2}) \wedge (\n{b_2} \vee \n{c_2} \vee \n{d_2}) \wedge (\n{a_2} \vee \n{c_2} \vee \n{d_2})
\end{split}
\end{equation*}
Note that initially, $x$ and $y$ are symmetrical. We now eliminate $x$. For exhaustive BVE to be symmetry-preserving, in any exhaustive application of the rule $y$ must in turn also be removed.
\begin{equation*}
\begin{split}
(y \vee a_2) \wedge (y \vee b_2) \wedge (\n{y} \vee c_2) \wedge (\n{y} \vee d_2) \wedge\\
(a_1 \vee A) \wedge (a_2 \vee A)  \wedge (b_1 \vee A) \wedge (b_2 \vee A) \wedge (c_1 \vee A) \wedge (c_2 \vee A)  \wedge (d_1 \vee A) \wedge (d_2 \vee A) \wedge (A) \wedge\\
(y \vee A) \wedge (\n{y} \vee A) \wedge \\
(a_1 \vee z_1) \wedge (b_1 \vee z_1) \wedge (c_1 \vee \n{z_1}) \wedge (d_1 \vee \n{z_1}) \wedge (a_2 \vee z_2) \wedge (b_2 \vee z_2) \wedge (c_2 \vee \n{z_2}) \wedge (d_2 \vee \n{z_2}) \wedge\\
(\n{a_1} \vee \n{b_1} \vee \n{c_1}) \wedge (\n{b_1} \vee \n{c_1} \vee \n{d_1}) \wedge (\n{a_1} \vee \n{c_1} \vee \n{d_1}) \wedge \\
(\n{a_2} \vee \n{b_2} \vee \n{c_2}) \wedge (\n{b_2} \vee \n{c_2} \vee \n{d_2}) \wedge (\n{a_2} \vee \n{c_2} \vee \n{d_2}) \wedge \\
(a_1 \vee c_1) \wedge (a_1 \vee d_1) \wedge (b_1 \vee c_1) \wedge (b_1 \vee d_1)
\end{split}
\end{equation*}
The resulting equation does indeed have fewer clauses than before. Note that the above already, unsurprisingly, implies that non-exhaustive BVE is not guaranteed to be symmetry-preserving. Next, we choose to eliminate $A$. Since $A$ is pure, this again reduces the number of clauses and leads to the following formula.
\begin{equation*}
\begin{split}
(y \vee a_2) \wedge (y \vee b_2) \wedge (\n{y} \vee c_2) \wedge (\n{y} \vee d_2) \wedge \\ 
(a_1 \vee z_1) \wedge (b_1 \vee z_1) \wedge (c_1 \vee \n{z_1}) \wedge (d_1 \vee \n{z_1}) \wedge (a_2 \vee z_2) \wedge (b_2 \vee z_2) \wedge (c_2 \vee \n{z_2}) \wedge (d_2 \vee \n{z_2}) \wedge\\
(\n{a_1} \vee \n{b_1} \vee \n{c_1}) \wedge (\n{b_1} \vee \n{c_1} \vee \n{d_1}) \wedge (\n{a_1} \vee \n{c_1} \vee \n{d_1}) \wedge \\
(\n{a_2} \vee \n{b_2} \vee \n{c_2}) \wedge (\n{b_2} \vee \n{c_2} \vee \n{d_2}) \wedge (\n{a_2} \vee \n{c_2} \vee \n{d_2}) \wedge \\
(a_1 \vee c_1) \wedge (a_1 \vee d_1) \wedge (b_1 \vee c_1) \wedge (b_1 \vee d_1)
\end{split}
\end{equation*}
Finally, we reduce $z_1$, which produces no new clauses.
\begin{equation*}
\begin{split}
(y \vee a_2) \wedge (y \vee b_2) \wedge (\n{y} \vee c_2) \wedge (\n{y} \vee d_2) \wedge \\ 
(a_2 \vee z_2) \wedge (b_2 \vee z_2) \wedge (c_2 \vee \n{z_2}) \wedge (d_2 \vee \n{z_2}) \wedge\\
(\n{a_1} \vee \n{b_1} \vee \n{c_1}) \wedge (\n{b_1} \vee \n{c_1} \vee \n{d_1}) \wedge (\n{a_1} \vee \n{c_1} \vee \n{d_1}) \wedge \\
(\n{a_2} \vee \n{b_2} \vee \n{c_2}) \wedge (\n{b_2} \vee \n{c_2} \vee \n{d_2}) \wedge (\n{a_2} \vee \n{c_2} \vee \n{d_2}) \wedge \\
(a_1 \vee c_1) \wedge (a_1 \vee d_1) \wedge (b_1 \vee c_1) \wedge (b_1 \vee d_1)
\end{split}
\end{equation*}
Depending on whether tautologies can be discarded or not, variables in the cluster $\{a_1, b_1, c_1, d_1\}$ can be reduced further. Note that this cluster of variables is independent from the rest of the variables. However, in any case, using this variable order, we are now not able to eliminate $y$ (or any of the variables in $\{a_2, b_2, c_2, d_2, z_2\}$), which would be required to satisfy the symmetry-preserving property.

Using a similar example we can show that (exhaustive) BVE is also not symmetry-lifting:
\begin{equation*}
\begin{split}
(y \vee a) \wedge (y \vee b) \wedge (\n{y} \vee c) \wedge (\n{y} \vee d) \wedge \\ 
(a \vee z) \wedge (b \vee z) \wedge (c \vee \n{z}) \wedge (d \vee \n{z}) \wedge \\
(\n{a} \vee \n{b} \vee \n{c}) \wedge (\n{b} \vee \n{c} \vee \n{d}) \wedge (\n{a} \vee \n{c} \vee \n{d}) \wedge \\
(A \vee a)
\end{split}
\end{equation*}
Initially, $a$ and $b$ are not symmetrical due to the clause $(A \vee a)$. Eliminating $A$ only removes the clause $(A \vee a)$. Now, $a$ and $b$ are symmetrical, in particular there is the symmetry that only interchanges $a$ and $b$. However, considering the case where $[A \mapsto \bot]$, we can see that this is not a semantic symmetry of the original formula: $a$ is a unit literal and hence assigning $[a \mapsto \bot]$ makes the formula unsatisfiable, while assigning $[b \mapsto \bot]$ still leaves the formula satisfiable.

However, we can still show the following.
\begin{lemma} Let $F \xrightarrow{\text{VE}} F'$. It holds that $\Aut_{\syn}(F)\Sred_{\Lit(F')} \leq \Aut_{\syn}(F')$. \label{lem:ve_almost_sympre}
\end{lemma}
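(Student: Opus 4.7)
The plan is to pick an arbitrary $\varphi \in \Aut_{\syn}(F)_{\{\Lit(F')\}}$ and show that its restriction $\psi := \varphi|_{\Lit(F')}$ is a syntactic symmetry of $F'$. Let $x$ denote the variable eliminated by the VE step, so that $F' = (F \smallsetminus (C_x(F) \cup C_{\n{x}}(F))) \cup R_x(F)$. Since $\varphi$ is a permutation of $\Lit(F)$ respecting negation and it setwise stabilizes $\Lit(F')$, it must also setwise stabilize the complementary literal set. In particular, $\{x,\n{x}\}$ is a subset of $\Lit(F)\smallsetminus \Lit(F')$, and together with the negation-compatibility $\n{\varphi(l)} = \varphi(\n{l})$ this forces $\varphi(\{x,\n{x\}}) = \{x,\n{x}\}$. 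So either $\varphi(x)=x$ or $\varphi(x)=\n{x}$.

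Next, I would split the argument into these two cases. In the case $\varphi(x)=x$, the sets $C_x(F)$ and $C_{\n{x}}(F)$ are each stabilized by $\varphi$ because $\varphi(F)=F$. In the case $\varphi(x)=\n{x}$, we instead get $\varphi(C_x(F)) = C_{\n{x}}(F)$ and vice versa. In both cases, $\varphi(C_x(F) \cup C_{\n{x}}(F)) = C_x(F) \cup C_{\n{x}}(F)$, and combined with $\varphi(F)=F$ this yields $\varphi(F \smallsetminus (C_x(F)\cup C_{\n{x}}(F))) = F \smallsetminus (C_x(F)\cup C_{\n{x}}(F))$.

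The third step is to show $\varphi(R_x(F)) = R_x(F)$. For a resolvent $C_1 \circ_x C_2$ with $C_1 \in C_x(F)$ and $C_2 \in C_{\n{x}}(F)$, applying $\varphi$ yields $(\varphi(C_1)\smallsetminus\{\varphi(x)\}) \cup (\varphi(C_2)\smallsetminus\{\varphi(\n{x})\})$. In the case $\varphi(x)=x$ this is simply $\varphi(C_1)\circ_x\varphi(C_2) \in R_x(F)$; in the case $\varphi(x)=\n{x}$ it equals $\varphi(C_2)\circ_x\varphi(C_1) \in R_x(F)$, using that now $\varphi(C_2) \in C_x(F)$ and $\varphi(C_1) \in C_{\n{x}}(F)$. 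Injectivity of $\varphi$ on clauses then gives equality of the two sets.

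Combining these two stability results shows $\varphi(F') = F'$. Since every clause of $F'$ uses only literals in $\Lit(F')$, the restriction $\psi = \varphi|_{\Lit(F')}$ is a well-defined permutation of $\Lit(F')$ that still respects negation and satisfies $\psi(F')=F'$, giving $\psi \in \Aut_{\syn}(F')$. The only delicate point I anticipate is the bookkeeping around the two cases for $\varphi(x)$ and making sure the resolvent orientation is handled correctly when $\varphi$ swaps $x$ and $\n{x}$; once that case distinction is written out cleanly, the rest is essentially a direct substitution.
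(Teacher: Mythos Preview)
Your argument has a gap at the very first step: the claim that $\varphi(\{x,\n{x}\}) = \{x,\n{x}\}$ does not follow from the premises you give. Setwise stabilizing $\Lit(F)\smallsetminus\Lit(F')$ only tells you that $\varphi(x)$ lands somewhere in that complement; it does not force $\varphi(x)\in\{x,\n{x}\}$ unless the complement \emph{equals} $\{x,\n{x}\}$. But eliminating $x$ can cause other variables to disappear as well. Concretely, take $F = \{(x\vee y),(a\vee b)\}$: eliminating the pure literal $x$ yields $F' = \{(a\vee b)\}$, so $\Lit(F)\smallsetminus\Lit(F') = \{x,\n{x},y,\n{y}\}$, and the symmetry $\varphi = (x\;y)(\n{x}\;\n{y})$ of $F$ lies in the setwise stabilizer yet sends $x$ to $y$. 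Your case split therefore does not cover all possibilities.

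The repair is a small case distinction you are missing. If both polarities of $x$ occur in $F$, then every literal $l\neq x,\n{x}$ appearing in some clause of $C_x(F)\cup C_{\n{x}}(F)$ survives in at least one resolvent, so indeed $\Lit(F)\smallsetminus\Lit(F') = \{x,\n{x}\}$ and your two-case analysis goes through verbatim. If instead $x$ is pure (say $C_{\n{x}}(F)=\emptyset$), then $R_x(F)=\emptyset$ and $F' = F\smallsetminus C_x(F)$; here the resolvent part is vacuous and you only need stability of $F\smallsetminus C_x(F)$, which follows without any hypothesis on $\varphi(x)$: every $D\in F'$ satisfies $D\subseteq\Lit(F')$, hence $\varphi(D)\subseteq\Lit(F')$, hence $x,\n{x}\notin\varphi(D)$, hence $\varphi(D)\in F\smallsetminus C_x(F)=F'$.

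By contrast, the paper's proof is phrased for eliminating an arbitrary set $L$ of variables at once and tracks, for each original clause, how its family of resolvents transforms under $\varphi$, using that the outcome of multi-variable elimination is independent of the elimination order. Your single-variable argument is more elementary and perfectly adequate for one VE step once the pure case is patched; the paper's formulation is set up to handle several eliminations in one stroke.
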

\begin{proof}

Let $F'$ denote the formula where the literals of $L$ were eliminated from $F$.
Let $\varphi \in \Aut_{\syn}(F)\Sred_{\Lit(F')}$. 
We prove that $\varphi \in \Aut_{\syn}(F')$.

Consider a clause $C \in F$ where $C \cap L = \emptyset$, i.e., that was not involved in variable elimination.
It follows immediately that $\varphi(C) \cap L  = \emptyset$.
Hence, $C \in F'$ and $\varphi(C) \in F'$.

Now consider $C \in F$ with $C \cap L = L' \neq \emptyset$.
It follows that $\varphi(C) \in F$ with $\varphi(C) \cap L = \varphi(L')$.
We use the fact that if multiple variables $v_1, \dots{}, v_m$ are eliminated, the order in which they are eliminated does not change the set of clauses produced.
Let $\{C_1', \dots{}, C_k'\} \subseteq F'$ denote the clauses that were resolved using $C$ when eliminating $L'$.
We prove that $\{\varphi(C_1'), \dots{}, \varphi(C_k')\}$ are the corresponding clauses created from $\varphi(C)$ when eliminating $\varphi(L') \subseteq L$. 
If $k = 0$, then by symmetry, the same follows for $\varphi(C)$.

We can see that in the first step, $C$ is resolved with $C_1, \dots{}, C_{k_1}$ each containing $v_1$ respectively.
We can mimic this step on $\varphi(C)$, resolving with $\varphi(C_1), \dots{}, \varphi(C_{k_1})$ each containing $\varphi(v_1)$ -- which all exist by symmetry.
In fact, we can mimic every step for $v_1, \dots{}, v_m$, using $\varphi(v_1), \dots{}, \varphi(v_m)$.
While this does not match the order used by variable elimination, we do know that $\{\varphi(v_1), \dots{}, \varphi(v_m)\} \subseteq L$, i.e., eventually we eliminate all involved variables.

The clauses involved in the resolution can then again introduce more literals involved in variable elimination, i.e., $(C_1' \cup \dots{} \cup C_k') \cap L = L'' \neq \emptyset$ (but of course $L'' \cap L' = \emptyset$).
Due to the setwise stabilizer, this however also happens in a symmetrical fashion for clauses produced from $C$ and $\varphi(C)$, i.e., the corresponding set for $\varphi(C)$ is $\varphi(L'')$.
Hence, we can repeat the argument recursively until no further literals of $L$ are introduced, and the claim follows.
\end{proof}

\textbf{Symmetric Variable Elimination.}
Lemma~\ref{lem:ve_almost_sympre} opens up the opportunity for \emph{symmetric variable elimination}.
If we can somehow ensure $\Aut_{\syn}(F)_{\{\Lit(F')\}} = \Aut_{\syn}(F)$, then the technique becomes symmetry-preserving.
The idea is that we can do so using any over-approximation of the orbit partition of $\Aut_{\syn}(F)$.

Let $\sigma$ denote the orbits of $\Aut_{\syn}(F)$, i.e., let $l_1, l_2 \in \Lit(F)$, then $\sigma(l_1) = \sigma(l_2)$ if and only if there exists a $\varphi \in \Aut_{\syn}(F)$ with $\varphi(l_1) = l_2$.
We can naturally lift this to variables, i.e., two variables $v_1, v_2$ are in the same orbit if either $\sigma(v_1) = \sigma(v_2)$ or $\sigma(v_1) = \sigma(\n{v_2})$.

A step of symmetric variable elimination eliminates a set of variables $V$, i.e., it performs multiple steps of variable elimination.
The crucial requirement is that for each $v \in V$ it additionally holds that $\sigma^{-1}(\sigma(v)) \subseteq V$, i.e., the entire orbit of $v$ must be contained in $V$.
This means that symmetric variable elimination is only allowed to eliminate unions of orbits simultaneously.
We can immediately conclude the following from Lemma~\ref{lem:ve_almost_sympre}:
\begin{corollary} Symmetric variable elimination is symmetry-preserving.
\end{corollary}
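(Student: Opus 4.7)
The plan is to derive the corollary directly from Lemma~\ref{lem:ve_almost_sympre} by supplying the one piece that lemma does not give: namely, the setwise-stabilizer equality $\Aut_{\syn}(F)_{\{\Lit(F')\}} = \Aut_{\syn}(F)$. Lemma~\ref{lem:ve_almost_sympre} already hands us $\Aut_{\syn}(F)\Sred_{\Lit(F')} \leq \Aut_{\syn}(F')$ for any variable elimination, so once the stabilizer identity is in place the two requirements from the definition of symmetry-preserving are both met, and the conclusion follows.

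For the stabilizer identity, the idea is to use the orbit-union hypothesis that defines symmetric variable elimination. Let $V$ denote the set of eliminated variables, so that $\Lit(F') = \Lit(F) \smallsetminus (V \cup \{\n{v} \mid v \in V\})$. By assumption, for every $v \in V$ we have $\sigma^{-1}(\sigma(v)) \subseteq V$, i.e., $V$ is a union of $\Aut_{\syn}(F)$-orbits (where orbits are taken on variables as defined in the paragraph above the statement). I would verify that this implies, for every $\varphi \in \Aut_{\syn}(F)$, that $\varphi(V) = V$, which in turn forces $\varphi(\Lit(F')) = \Lit(F')$ because syntactic symmetries commute with negation. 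Hence every element of $\Aut_{\syn}(F)$ already setwise-stabilizes $\Lit(F')$, giving $\Aut_{\syn}(F)_{\{\Lit(F')\}} = \Aut_{\syn}(F)$.

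Combining the two pieces: the reduction step of Lemma~\ref{lem:ve_almost_sympre} embeds $\Aut_{\syn}(F)\Sred_{\Lit(F')}$ into $\Aut_{\syn}(F')$, and the stabilizer identity ensures no symmetries are lost when passing to the setwise stabilizer. Together, these are exactly the two conditions in the definition of symmetry-preserving, so symmetric variable elimination is symmetry-preserving.

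I expect no substantive obstacle here; the proof is essentially a bookkeeping argument that exploits the orbit-closedness condition built into the definition. The only subtlety worth stating carefully is that $V$ being a union of variable orbits, rather than literal orbits, is enough to imply invariance of $\Lit(F')$, because symmetries respect negation; this is a one-line observation but deserves to be spelled out so that the reader sees exactly how the hypothesis $\sigma^{-1}(\sigma(v)) \subseteq V$ is being used.
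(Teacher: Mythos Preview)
Your proposal is correct and matches the paper's intended argument: the paper simply states that the corollary follows immediately from Lemma~\ref{lem:ve_almost_sympre}, and what you have written is precisely the spelled-out version of that immediate deduction, supplying the missing setwise-stabilizer identity via the orbit-closedness requirement built into the definition of symmetric variable elimination. There is nothing to add or correct.
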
 
Following our practical arguments, we should obviously not truly use the orbit partition, since this would defeat the purpose of using preprocessing to speed-up symmetry detection.
However, practical graph isomorphism solving is entirely based around fast over-approximations of the orbit partition.
The main subroutine of practical graph isomorphism solvers is \emph{color refinement} (or 1-dimensional Weisfeiler-Leman), which precisely delivers such a partition \cite{McKay201494}.

\section{A Symmetry-preserving Preprocessor} 
We implement and test a proof-of-concept, symmetry-preserving preprocessor.

\subsection{Implementation and Refined Graph Encoding} \label{sec:impl}
The main strategy is to apply the symmetry-preserving CNF simplifications discussed in the previous section.
Firstly, the preprocessor performs exhaustive unit, pure, subsumption, simultaneous self-subsumption and blocked clause elimination.
We implemented subsumption and simultaneous self-subsumption based on the techniques described for \textsc{SatELite} \cite{DBLP:conf/sat/EenB05}. 
Then, we perform bounded variable elimination, again following the heuristics of \textsc{SatELite}.
However, we only do this step for variables guaranteed to be asymmetrical.
Since we only do this step for asymmetrical variables, it becomes trivial to simultaneously guarantee symmetric variable elimination and bounded variable elimination.
We ensure variables to be asymmetrical by performing color refinement on the model graph of the reduced formula.
We employ the color refinement implementation of \textsc{dejavu} \cite{DBLP:conf/alenex/AndersS21}. 

There is one more technique we use to improve the graph encoding.
Before computing symmetries, we reduce the graph using the coloring computed by color refinement.
We simply remove all asymmetrical parts of the graph, i.e., all nodes which have a discrete color.
For readers familiar with graph isomorphism solvers, this strategy might be counter-intuitive.
Essentially, the strategy mimics precisely the first step of graph isomorphism solvers.
However, in the concrete implementations of solvers, many components are simply tied directly to the number of initial vertices (e.g., incidence-lists or the dense Schreier-Sims algorithm in \textsc{Traces} and \textsc{dejavu}). 
\subsection{Benchmark Setup} \label{sec:experiments_setup}
We perform an evaluation on the $400$ instances of the main track of the SAT competition 2021 \cite{satComp2021}, comparing unprocessed versus preprocessed instances. 
Our evaluation comprises two parts, covering the two main aspects of detecting symmetry: measuring computation time of symmetry detection and breaking algorithms, and measuring the symmetry detected in the instances.
The goal of our benchmarks is to provide a proof-of-concept that there can be a tangible benefit in synergizing SAT techniques with symmetry detection.

The code for the preprocessor, solvers and benchmarks are available at \cite{engagesweb}. 

\textbf{Computation Time.} We test symmetry detection and breaking algorithms on unprocessed versus preprocessed instances. 
The goal is to find out whether symmetry-preserving formula transformations provide a benefit to algorithms concerning symmetry. 
We test three different graph isomorphism solvers, i.e., \saucy{} \cite{DBLP:conf/dac/DargaLSM04, DBLP:conf/dac/DargaSM08}, \Traces{} \cite{McKay201494, DBLP:journals/corr/abs-0804-4881} and \dejavu{} \cite{DBLP:conf/alenex/AndersS21, DBLP:conf/esa/AndersS21} (4 threads, $<1\%$ probability of missing a generator).
While all state-of-the-art solvers are based on the individualization-refinement framework, the tested solvers indeed represent distinct approaches to symmetry detection:	
first of all, \saucy{} was specifically designed to be used on graphs stemming from CNF formulas \cite{DBLP:conf/dac/DargaLSM04}.
	\Traces{} is well-known to be a highly-competitive general purpose GI solver. It uses a BFS approach and excels on hard combinatorial instances \cite{McKay201494}, while still featuring low-degree techniques typically deemed useful for CNF formulas \cite{DBLP:conf/dac/DargaLSM04, DBLP:conf/dac/DargaSM08} (e.g., unit literals always have degree $1$).
	\dejavu{} is a recent general-purpose GI solver. 
	It employs novel randomized search strategies \cite{DBLP:conf/alenex/AndersS21} and features parallelization \cite{DBLP:conf/esa/AndersS21}.
	It has no special low-degree vertex or CNF techniques.
	Furthermore, we measure running time of the symmetry breaker \textsc{BreakID} on the CNF formulas (here, we do not use the special graph encoding of Section~\ref{sec:impl}).
	Internally, \textsc{BreakID} uses \saucy{}.
	
	Note that even for the ``unprocessed'' instances, we first removed all redundant clauses and literals from the formula.
	The timeout is $100$ seconds, analogous to the evaluation of \cite{DBLP:conf/sat/Devriendt0BD16}.
	A solver running out of memory also counts as a timeout. 

	Our working assumption is that most of the proposed preprocessing (or inprocessing) techniques are applied anyway in competitive SAT solvers (see e.g., \cite{BiereFazekasFleuryHeisinger-SAT-Competition-2020-solvers, DBLP:conf/tacas/OsamaWB21, DBLP:conf/sat/SoosNC09}).
	Solvers such as \textsc{cryptominisat} \cite{DBLP:conf/sat/SoosNC09} even enable the user to re-schedule preprocessing and symmetry techniques.
	Our proposed setup can essentially be realized by carefully scheduling a selection of techniques in an existing preprocessor before symmetry detection, instead of implementing from scratch -- with the exception of our adapted symmetry-preserving transformations.
	But here, the only technique we apply that is unconventional in terms of SAT preprocessors is color refinement.

	We thus record and account for the computation time of color refinement separately.
	In other areas, our preprocessor is indeed poorly optimized compared to state-of-the-art implementations, and we will disregard its computation time.

	\textbf{Detected Symmetry.} We analyze how much symmetry is detected both before and after applying preprocessing. 
	To be more precise, we compute the following metrics.
	Let $F$ be the original formula and $F^*$ the preprocessed one.
	First of all, we check for \emph{reducible symmetry} that acts exclusively on literals removable through preprocessing, i.e., which we measure in terms of \[\frac{|\Aut_{\syn}(F)|}{|{\Aut_{\syn}(F)}_{(\Lit(F^*))}|}.\]
	The expression describes the number of symmetries of $F$ which pointwise-stabilize all literals of $F^*$, i.e., which solely act on the literals $\Lit(F) \smallsetminus \Lit(F^*)$. 

	Furthermore, we quantify the amount of \emph{hidden symmetry} detectable in the preprocessed but not the unprocessed instance, i.e., \[\frac{|\Aut_{\syn}(F^*)| \cdot |{\Aut_{\syn}(F)}_{(\Lit(F^*))}|}{|\Aut_{\syn}(F)|}.\]

\subsection{Benchmark Discussion}
\begin{figure}
	\centering
	\begin{subfigure}{0.24\textwidth}
	\scalebox{0.4}{
\begin{tikzpicture}

\definecolor{color0}{rgb}{0.12156862745098,0.466666666666667,0.705882352941177}

\begin{axis}[
log basis x={10},
log basis y={10},
tick align=outside,
tick pos=left,
x grid style={white!69.0196078431373!black},
xmajorgrids,
xmin=0.562341325190349, xmax=177827.941003892,
xmode=log,
xtick style={color=black},
y grid style={white!69.0196078431373!black},
ymajorgrids,
ymin=3.54813389233575e-05, ymax=281838.293126445,
ymode=log,
ytick style={color=black},
ylabel={time on preprocessed},
xlabel={time on unprocessed}
]
]
\addplot [draw=black, fill=black, mark=x, only marks]
table{%
x  y
58.5034 16.3831
5.00719 0.189569
44551.8 1897.48
142.622 0.02518
2165.31 0.0001
43.7667 12.4223
21.5338 0.0001
6167.62 0.0001
1935.02 1465.9
186.932 0.013604
3.94726 0.008358
2.10938 0.0001
58.2172 42.8927
165.84 0.015225
1262.1 1165.65
4650.36 0.0001
262.309 0.013754
100000 100000
3326.18 10.7297
5.1391 4.59821
16.7609 0.0001
3.85779 0.006084
2996.91 1189.19
484.935 345.72
7899.19 3436.93
616.966 293.59
17401 15840.9
125.637 0.0001
1319.59 2.85908
2.27947 0.008745
4730.2 13.5879
7384.96 1539.7
6134.94 1313.53
1441.14 555.525
2064.28 704.55
2437.88 1241.14
235.102 87.4614
100000 100000
7.55747 0.0001
2.05894 2.17047
1.51166 0.005771
21.7248 0.0001
72.2583 0.0001
111.408 0.015127
57543.5 26710.3
3123.43 2888.71
1796.26 1796.78
1069.79 3.19779
129.636 58.5543
3.43288 3.01515
359.332 473.897
1198.98 9.69224
3312.83 269.288
1642.65 667.831
140.021 73.2952
172.956 0.013672
77417.9 23377.6
278.708 272.853
3664.18 323.31
439.934 221.189
253.758 238.907
2.5747 0.0001
128.687 55.5775
226.605 232.485
3113.33 906.419
3599.2 11.2182
1617.71 567.201
3083.46 9.25927
10692.8 4265.52
2772.33 1063.77
448.271 126.499
8091.77 0.0001
20.8292 0.015727
1432.09 37.3895
121.805 57.7471
100000 100000
6383.23 0.0001
4037.37 11.9526
2.40685 0.0001
14.3693 14.3517
76.1522 0.013754
179.635 93.4669
118.326 64.8717
1168.35 436.472
1564.78 449.578
7378.45 2612.98
10214.2 41.1531
10.562 0.0001
70.5183 16.2757
10129.1 72.5651
384.555 13.8189
17.2109 1.94849
239.782 0.0001
7.41651 0.0001
6.94221 0.0001
3473.81 848.233
17.8736 12.893
19.3398 0.0001
1110.75 15.0183
484.158 0.0001
2633.97 8.91461
2915.73 9.01294
3.38196 0.00577
40.5687 30.9691
14706.6 14233.5
100000 100000
22.5352 0.014015
5922.4 0.0001
3.0045 3.46035
47.2318 14.2955
35.8885 0.01444
1200.37 1189.94
224.403 142.722
23.1717 0.0001
442.563 105.252
1527.64 0.0001
232.131 61.5419
291.927 282.563
2450.57 1042.9
4056.48 903.265
14788.3 13567.5
33.2946 0.013522
1393.06 769.808
100000 100000
3431.2 19.9484
4285.88 981.207
2.66089 0.0001
185.367 267.735
155.668 64.5652
8159.15 3794.59
2.0381 0.009405
2.51679 0.0001
100000 100000
49153.9 31882.8
9617.28 3700.95
3775.53 0.0001
1.13509 1.88766
245.58 0.976403
31.4819 0.014139
1.22802 0.0001
12.8574 4.63959
5129.76 10.2728
131.008 61.6303
3243.15 876.033
2452.8 8.40402
100000 100000
5150.66 10.8635
3.40393 0.006025
2.27549 0.0001
8077.17 10497.8
5351 14.0433
3132.32 690.662
3.05827 0.006317
3490.52 635.681
1246.02 438.738
128.785 0.01678
208.399 184.49
3995.71 9.27901
40898.8 11028.8
1666.47 598.406
94.7667 25.668
7419.98 0.0001
120.07 40.1403
16.8101 0.475316
69.6699 69.6159
1413.6 570.658
109.441 0.015435
816.423 805.193
18.6823 17.0532
463.612 553.125
119.588 108.168
4.22034 0.006278
1.45179 1.60426
12.8403 0.0001
1614.54 935.797
338.741 0.013573
35019.2 10271
100000 100000
1.7406 2.34904
6669.48 0.0001
352.609 161.633
1541.64 5.56882
100000 100000
5287.42 335.111
10153 0.0001
10392.8 7989.08
185.074 182.795
1.55368 0.0001
109.968 53.0166
1441.99 1508.31
4.517 0.006163
100000 100000
10809.1 0.0001
139.968 182.297
2.06346 1.77123
19980.1 18224
4764.29 7.5667
3.99694 0.0001
3754.8 1472.82
};
\draw[line width=4pt, green!30] (axis cs:0.1,0.0001) -- (axis cs:1000000,0.0001);
\addplot [semithick, color0]
table {%
1 1
100000 100000
};
\end{axis}

\end{tikzpicture}}
	\caption{\saucy{}.}
	\end{subfigure}
	\begin{subfigure}{0.24\textwidth}
		\scalebox{0.4}{
\begin{tikzpicture}

\definecolor{color0}{rgb}{0.12156862745098,0.466666666666667,0.705882352941177}

\begin{axis}[
log basis x={10},
log basis y={10},
tick align=outside,
tick pos=left,
x grid style={white!69.0196078431373!black},
xmajorgrids,
xmin=0.562341325190349, xmax=177827.941003892,
xmode=log,
xtick style={color=black},
y grid style={white!69.0196078431373!black},
ymajorgrids,
ymin=3.54813389233575e-05, ymax=281838.293126445,
ymode=log,
ytick style={color=black},
ylabel={time on preprocessed},
xlabel={time on unprocessed}
]
\addplot [draw=black, fill=black, mark=x, only marks]
table{%
x  y
118.117 46.6465
39.5592 0.719361
46512.4 781.126
12464 24.7176
100000 0.0001
87.2475 26.54
69.3318 0.0001
100000 0.0001
4087.44 3021.63
585.529 26.445
9.47085 0.274993
19.3133 0.0001
98.2936 81.1652
188.352 14.4501
1080.6 754.789
100000 0.0001
301.502 27.4938
100000 100000
100000 88.485
5.95397 27.9457
142.201 0.0001
10.1907 0.2417
4467.61 4905.61
757.959 742.304
3280.79 1022.77
655.302 476.878
5996.38 5810.33
852.108 0.0001
2981.11 17.8902
5.57047 0.276509
100000 115.054
2353.18 567.806
3269.38 557.627
10010 5240.72
792.08 371.587
6573.6 1764.59
894.156 306.365
77553.6 100000
35.0064 0.0001
1.49137 2.09686
3.71241 0.177851
74.8543 0.0001
308.958 0.0001
100000 26.3172
41116.4 18771.2
1402.05 1167.03
1004.74 1081.87
2438.16 8.28093
2055.03 290.246
1.92037 1.74367
3.65716 3.59276
4075.82 37.8395
1618.46 106.657
6145.29 3393.17
64.401 311.194
100000 0.313987
100000 16353.4
127.659 138.771
1638.91 121.31
712.804 518.069
109.257 141.934
30.7003 0.0001
466.372 224.163
509.113 287.668
1199.59 894.49
100000 90.0197
10634.4 5296.4
100000 85.983
3802.76 1797
2578.73 3354.08
211.303 91.2556
100000 0.0001
100000 0.289301
5318.41 172.949
2164.25 285.998
100000 100000
100000 0.0001
100000 95.5373
5.38113 0.0001
26.4964 15.4603
100000 0.387642
87.895 356.739
67.7627 273.783
4913.06 946.266
2973.43 1218.47
2618.15 1012.06
100000 339.371
29.7265 0.0001
307.734 45.3719
100000 615.768
1695.07 69.3439
34.2055 3.85889
16763.6 0.0001
30.6294 0.0001
34.6682 0.0001
1300.64 244.011
32.07 31.6375
63.3842 0.0001
3979.51 62.4625
8468.33 0.0001
100000 69.4302
100000 73.9515
8.67325 0.255998
68.7755 58.6381
5752.25 5116.13
100000 100000
100000 13.8587
100000 0.0001
3.73165 2.36289
120.922 54.422
100000 21.5265
899.233 973.566
103.251 144.004
124.327 0.0001
221.883 52.0603
100000 0.0001
160.831 64.9444
783.485 394.97
4578.5 3603.39
1592 319.721
7100.9 7174.36
100000 27.8209
776.466 376.065
100000 100000
13145.1 133.728
2251.42 2034.34
5.95134 0.0001
41.8018 56.2841
671.786 291.54
3211.52 1580.03
5.81489 0.273387
5.65576 0.0001
100000 100000
100000 100000
23828.2 100000
100000 0.0001
1.60656 4.33917
385.056 2.82698
100000 0.371612
3.99284 0.0001
27.426 10.552
100000 88.4692
226.065 152.621
2787.39 1864.17
100000 67.6233
100000 55709.1
100000 87.0214
14.5455 0.462796
5.0738 0.0001
1999.84 2815.7
100000 115.8
1057.09 402.858
21.0945 24.2703
13986.6 6320.63
7699.06 3939.41
100000 0.398084
108.354 126.007
100000 75.4225
100000 7151.37
10206.3 5134.01
377.061 69.8634
100000 0.0001
477.339 121.496
53.1696 1.31147
13.5872 13.8511
11519.3 6084.03
123.125 0.193556
644.35 526.636
27.5302 30.9225
2730.85 2982.02
65.2984 69.0714
11.5551 0.265835
4.55903 3.58575
98.4148 0.0001
4851.68 2167.86
100000 0.353522
12912.1 5770.38
100000 100000
21.0366 20.6499
100000 0.0001
158.648 641.04
3831.46 17.6826
100000 100000
5736.43 213.786
100000 0.0001
4295.96 2645.63
112.387 107.095
5.00025 0.0001
235.643 130.679
1251.17 886.295
11.5349 0.284837
100000 100000
100000 0.0001
143.234 167.262
41.1043 45.2081
6780.87 7596.54
11016 25.3763
35.5003 0.0001
9996.12 11179.9
};
\draw[line width=4pt, green!30] (axis cs:0.1,0.0001) -- (axis cs:1000000,0.0001);
\addplot [semithick, color0]
table {%
1 1
100000 100000
};
\end{axis}

\end{tikzpicture}}
	\caption{\dejavu{}.}
	\end{subfigure}
	\begin{subfigure}{0.24\textwidth}
		\scalebox{0.4}{
\begin{tikzpicture}

\definecolor{color0}{rgb}{0.12156862745098,0.466666666666667,0.705882352941177}

\begin{axis}[
log basis x={10},
log basis y={10},
tick align=outside,
tick pos=left,
x grid style={white!69.0196078431373!black},
xmajorgrids,
xmin=0.562341325190349, xmax=177827.941003892,
xmode=log,
xtick style={color=black},
y grid style={white!69.0196078431373!black},
ymajorgrids,
ymin=3.54813389233575e-05, ymax=281838.293126445,
ymode=log,
ytick style={color=black},
ylabel={time on preprocessed},
xlabel={time on unprocessed}
]
\addplot [draw=black, fill=black, mark=x, only marks]
table{%
x  y
99.4642 27.6565
9.11751 0.275418
100000 20775
73677.4 0.054646
3552.54 0.0001
89.1566 20.3444
36.2664 0.0001
9190.75 0.0001
5397.75 2077.52
298.273 0.03782
7.844 0.025763
3.89334 0.0001
76.7424 46.4668
331.222 0.033677
1253.03 1020.53
7311.28 0.0001
456.565 0.032095
100000 100000
100000 11.3775
78.5918 75.8072
32.4768 0.0001
7.96838 0.027094
5532.26 4622.4
541.36 477.126
8823.89 3439.46
545.054 310.546
8543.54 7616.59
370.379 0.0001
2639.14 4.10299
4.14737 0.02855
100000 15.3812
13096.1 2367.16
6179.58 1898.16
33990 3780.56
3135.27 1681.98
13742 1203.22
551.325 211.974
100000 100000
49.4791 0.0001
2.24323 2.08476
2.8491 0.03332
37.5317 0.0001
161.507 0.0001
70692.7 0.046682
100000 50836.6
2086.29 1392.82
1394.44 1207.15
2308.87 4.35592
2687.74 199.422
2.12913 2.63447
37.7699 34.1994
2593.16 14.7901
5750.62 232.297
2536.56 2256.7
335.809 22626.5
100000 0.035662
100000 100000
824.623 764.694
4719.36 396.221
546.9 276.74
507.929 2094.45
4.64553 0.0001
255.677 124.379
343.341 267.068
878.754 3150.95
100000 12.2491
37201 3960.93
100000 9.98296
18152.6 4782.36
2788.05 2427.27
1055.66 533.195
12285.9 0.0001
3529.23 0.035646
2885.69 61.0499
2183.67 217.171
100000 100000
10170.6 0.0001
100000 13.376
4.52026 0.0001
14.2082 14.2307
39826.6 0.040324
362.709 29231
226.237 30512.6
1679.34 572.517
2039.96 589.487
5929.72 2171.06
20600.3 92.2179
20.8061 0.0001
358.295 165.637
20795.6 176.242
838.996 22.7844
31.0714 2.39174
453.376 0.0001
33.9087 0.0001
59.1656 0.0001
3012.62 698.859
21.5321 17.3199
33.0911 0.0001
2423.83 25.4606
848.742 0.0001
100000 9.74802
100000 9.67431
7.13098 0.027993
51.3998 32.7302
8280.33 8564.9
100000 100000
49.7746 0.03953
9203.01 0.0001
4.95533 4.01252
87.3944 22.5954
13525.4 0.036604
1130.23 920.287
140.656 144.382
38.6204 0.0001
7949.13 568.788
2568.52 0.0001
427.456 167.007
455.798 379.618
5888.73 3602.5
4019.89 627.143
8516.54 7392.41
10794.4 0.042374
1779.42 892.816
100000 100000
7016.89 39.6229
2222.68 1837.11
5.06711 0.0001
469.767 478.605
1521.52 207.678
9384.71 4568.09
4.28921 0.028795
4.79625 0.0001
100000 100000
35224.3 36170.8
12494.8 5813.87
6296.35 0.0001
2.01567 2.17806
333.97 2.03744
4972.69 0.03915
2.18855 0.0001
31.2363 13.5764
100000 10.8318
197.787 114.32
3159.98 776.801
100000 8.83848
100000 100000
100000 11.9561
7.0151 0.029945
4.26931 0.0001
100000 100000
100000 14.3384
4382.67 2673.32
6.19692 0.027845
5296.01 1331.75
23042.8 2918.61
100000 0.038308
687.154 695.674
100000 10.1777
100000 100000
39814.2 3464.48
1034.5 242.297
12377.2 0.0001
305.03 79.245
29.425 0.607458
124.675 124.995
8160.1 4086.92
182.733 0.035882
830.314 795.654
18.6863 21.4495
5313.37 4734.91
360.161 356.479
8.97755 0.024498
2.79919 1.85255
23.9452 0.0001
2477.55 842.448
100000 0.035634
100000 100000
100000 100000
3.18114 3.06321
10563.5 0.0001
808.069 43527.3
3103.43 9.92677
100000 100000
7673.6 337.798
14161.1 0.0001
14561.4 9473.4
600.954 724.728
3.0407 0.0001
172.974 87.0524
1455.31 1309.17
11.034 0.029332
100000 100000
16819.4 0.0001
412.954 741.229
1.89399 1.5411
10859.2 9513.34
8845.81 12.65
8.31924 0.0001
5065.27 2313.23
};
\draw[line width=4pt, green!30] (axis cs:0.1,0.0001) -- (axis cs:1000000,0.0001);
\addplot [semithick, color0]
table {%
1 1
100000 100000
};
\end{axis}

\end{tikzpicture}}
	\caption{\Traces{}.}
	\end{subfigure}
	\begin{subfigure}{0.24\textwidth}
		\scalebox{0.4}{
\begin{tikzpicture}

\definecolor{color0}{rgb}{0.12156862745098,0.466666666666667,0.705882352941177}

\begin{axis}[
log basis x={10},
log basis y={10},
tick align=outside,
tick pos=left,
x grid style={white!69.0196078431373!black},
xmajorgrids,
xmin=0.562341325190349, xmax=177827.941003892,
xmode=log,
xtick style={color=black},
y grid style={white!69.0196078431373!black},
ymajorgrids,
ymin=0.562341325190349, ymax=177827.941003892,
ymode=log,
ytick style={color=black},
ylabel={time on preprocessed},
xlabel={time on unprocessed}
]
]
\addplot [draw=black, fill=black, mark=x, only marks]
table{%
x  y
607.361166999908 610.980500000096
53.7217790001705 35.392408000007
100000 9971.08663599965
5889.65538100001 988.784222000049
19737.4744220006 16332.1341430001
450.136984999517 447.661051000068
262.201096000354 24.075093000647
40653.3323249996 37596.0766440003
14106.5866919998 6545.60082399985
3462.98118400045 2.52966700008983
63.5131609997188 50.7317669998884
41.219285999432 38.6657999997624
324.140537999483 304.019776000132
2580.61681300023 1379.44616799996
4032.26993800035 4652.02298700024
33420.1604939999 32682.1283170002
4137.80855999994 2233.20665600022
100000 100000
100000 17882.0477729996
35.1140310003757 30.3799579996848
307.310488999974 262.449614999241
66.5424710005027 54.3394070000431
10438.7891380002 7863.40281100001
2684.48021699987 2741.8278819996
13136.9469169995 5652.9184609999
2094.9178740002 1847.73344199948
39720.7146159999 40463.7597160008
2581.72722299969 1548.49379799998
9069.24081899979 3404.0622499997
36.5734599999996 28.5972689998744
100000 28259.0305620006
12795.288192 2585.71172500069
11647.5285839997 2994.63318500057
16656.9485029995 17752.9264619998
6982.67841500001 2992.39557500005
20137.2146389995 5641.96636100041
2877.56063599954 2536.89436100012
100000 100000
153.187828998853 105.709801000557
7.69140000011248 7.87974400009261
26.4924710008927 22.8839430001244
349.605189001522 30.8093759995245
1130.99839499955 786.841072000243
50337.9843840012 1358.06888600018
100000 90100.0735329999
10097.875391999 9579.59528399988
7672.35089099995 8123.27510700015
10054.6167350003 4080.12844199948
1973.73479800081 1746.84455199895
11.1301079996338 11.7290210000647
1100.61381199921 456.252183999823
11277.8736849996 4334.52527500049
8274.60531999895 663.448180999694
11418.900961 10278.8457840015
587.385713000913 3810.52208799883
99015.7043930012 1981.03943600108
100000 100000
1033.71182800038 971.228444001099
7548.6573299986 732.341961000202
3430.71928599966 3045.56844800027
816.063218000636 727.225062999423
48.4198080011993 43.0774609994842
1324.41376700081 1269.5373610004
978.777148000518 996.284332999494
15659.1675029995 9549.94873000032
100000 22038.3921039993
19548.9870689998 19856.2076790004
100000 18095.8162349998
18680.7463919995 9570.73486800073
9606.7645940002 9254.87772499946
1920.66416799935 764.775398998609
54555.3600439998 53018.2209089999
4184.21894899984 223.092112000813
16863.7865309993 17079.5514759993
1951.76856699982 1732.0661800004
100000 100000
51266.2995680002 48979.8066210005
100000 25342.5573380009
52.7514749992406 46.5593379994971
57.6048010007071 57.6052449996496
29788.5955959991 851.340757000798
670.049573998767 4666.03123599998
386.883518000104 2432.30165900059
4009.35079600094 2178.47133800024
6219.52029100066 3746.08049100061
13368.7946069986 6346.00238399798
61238.5326650001 28549.0891609988
143.839843000023 105.334706000576
1143.85950699943 677.621076996729
61788.5952680008 28704.9328389985
5301.86278100155 4959.54210299897
231.496905002132 241.417470002489
4613.12562000239 2592.61068299747
128.427016999922 98.9116140008264
131.333110002743 93.1084889998601
6253.12910299908 1400.13766500124
124.872363001487 122.803975998977
261.827158999949 25.1274219990592
9425.27881299975 3591.07941800175
5101.96873099994 4746.85688700265
100000 15841.7795759997
100000 17104.0808090002
61.8071499993675 49.2847249988699
253.833101000055 243.514164001681
39578.785978003 41718.5194500016
100000 100000
31612.0407849994 197.8987399998
46405.4568989995 43149.8590449992
35.5416729980789 16.8480299980729
559.545545002038 559.544375999394
9044.56116699657 405.012273997272
4930.05742199966 4945.51453299937
668.843076000485 359.845502000098
307.623500997579 28.3106320021034
748.527938001644 277.021324000088
11518.413247999 10729.3969319981
910.770985999989 317.988586000865
1165.69239399905 1126.76611300049
10029.8419379978 9204.15767799932
6725.81286000059 1307.92268900041
46011.4629609998 49161.5986270008
8610.28374299713 388.870441001927
2753.01559299987 1549.57994299912
100000 100000
23725.0786210025 7371.79774700053
12537.5905219989 8327.68205100001
56.9504630002484 56.3456529998803
603.446863999125 510.785669001052
1862.49064300137 1734.95311299848
17400.2590630007 10021.3937769986
40.4415819975839 32.2625929984497
53.6678989992652 50.4630010000255
100000 100000
100000 100000
76443.0662510022 78165.1686719997
30831.7758260018 29468.1987840013
18.0000729997118 17.6852610020433
2199.04505799786 1955.90716900188
7982.25207400174 363.236568999127
19.61012099855 14.6632139985741
181.389729001239 158.812261001003
100000 22205.4447019982
835.225761999027 837.803750997409
12772.7458430018 5278.76579800068
100000 15408.8823040001
100000 100000
100000 21397.0819989991
59.5205680001527 49.0620370001125
47.8803500009235 43.1383680006547
69207.1343149983 100000
100000 26213.4071719993
12186.8776869997 3719.41810900171
50.6729150001775 41.1728790022607
20359.1309469994 19792.4475420004
13663.9132029995 13989.3361110007
60635.724081003 1335.46081900204
502.298326999153 500.713269000698
100000 17839.9211299984
100000 100000
22086.8279239985 24776.3493070015
1461.06004399917 1102.75295199972
55823.6037089991 54683.4021230025
1790.12515100112 1563.38703499932
183.643938999012 119.218226001976
193.913301998691 189.942835997499
19455.4324759993 20456.8130500011
1801.85908600106 987.006814000779
4568.73655100208 4766.63660500344
74.0375759996823 70.0349130020186
5909.68174099908 7618.82687300022
382.382734002022 430.338702000881
85.4099960015446 67.3289020014636
25.8024810027564 15.912663002382
264.322861999972 227.819607996935
5917.44467899844 3512.5672159993
100000 3964.83482400072
60866.3845720002 25942.4966110018
100000 100000
27.1204180025961 19.8885299978429
41721.8878959975 40103.4230299992
1030.09576399927 7065.00013500045
10237.6213219977 4168.38787400047
100000 100000
10104.164411001 638.658053001564
22127.179213996 15756.062256005
656.860903000052 647.189777999301
30.5756520028808 15.0125319996732
821.708061994286 813.504555000691
6047.41320299945 8141.72447900637
90.4006499986281 72.5837429999956
100000 100000
460.224187998392 464.149642997654
8.53462899976876 9.24462100374512
70243.1043619945 79598.0655090025
32711.9974839952 10420.4039300021
97.0663390035043 86.432075004268
26938.3635569975 25736.0401169935
};
\addplot [semithick, color0]
table {%
1 1
100000 100000
};
\end{axis}

\end{tikzpicture}}
	\caption{\textsc{BreakID}.} \label{fig:breakid}
	\end{subfigure}
	\caption{Benchmarks for symmetry detection and breaking algorithms comparing unprocessed versus preprocessed SAT competition 2021 instances. Times are given in milliseconds. Instances that were processed to empty graphs (green bar) are shown separately.} \label{fig:sym_detect_times}
\end{figure}
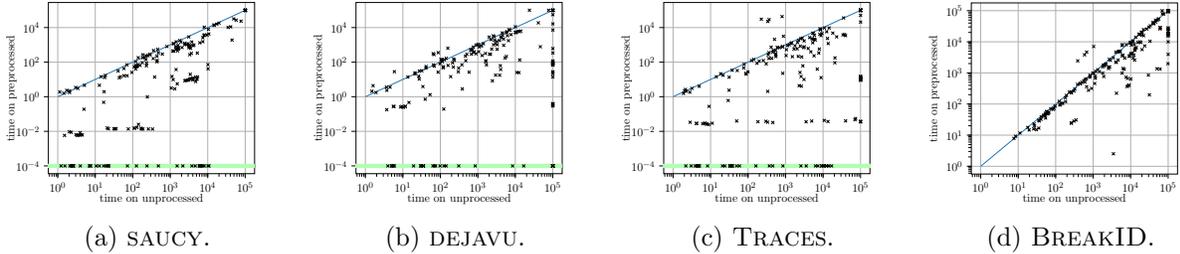

\begin{table}
	\centering
	\begin{tabular}{|l||c|c||c|c|}\hline
		   & \multicolumn{2}{c||}{unprocessed} & \multicolumn{2}{c|}{preprocessed}\\
	Solver & \multicolumn{1}{c}{\#finished} & avg.~(finished) & \multicolumn{1}{c}{\#finished} & avg.~(finished)\\\hline
	\saucy{}  & $\mathbf{189}$ & $3.67s$ & $\mathbf{189}$ & $\mathbf{1.28s}$\\\hline
	\dejavu{} & $151$ & $3.38s$ & $\mathbf{188}$ & $\mathbf{1.25s}$\\\hline
	\Traces{} & $168$ & $5.03s$ & $\mathbf{185}$ & $\mathbf{2.04s}$\\\hline
	\textsc{BreakID} & $170$ & $10.61s$ & $\mathbf{185}$ & $\mathbf{8.46s}$\\\hline	
\end{tabular}
	\caption{Benchmarks for symmetry detection and breaking algorithms comparing unprocessed versus preprocessed SAT competition 2021 instances ($199$ instances). Time taken for preprocessing is not included. The average time given is only for instances that finished within the $100s$ timeout.} \label{tab:sym_detect_times}
\end{table}

For the sake of clarity, in the entire evaluation, we exclude instances that were asymmetrical in \emph{both} the unprocessed and preprocessed instance. 
This was the case for $192$ instances. Note that asymmetrical instances in the set are solved quickly by the graph isomorphism solvers and there are only negligible differences between solvers (e.g. \saucy{} took $26s$ for all asymmetrical instances).
Furthermore, our preprocessor failed to process $9$ instances, either because it ran out of memory, or took more than $1000$ seconds.
Thus, in the following, all results are stated for the remaining $199$ instances.

\textbf{Computation Time.} 
Firstly, the preprocessor itself took a total time of $8946$ seconds ($45s$ average).
The color refinement algorithm for all the instances took $255$ seconds ($1.28s$ average). 
The results for the different symmetry detection and breaking algorithms are summarized in Table~\ref{tab:sym_detect_times}. 

We observe a speed-up for all tested tools across the benchmark suite (see Figure~\ref{fig:sym_detect_times}): we observe the same number or fewer timeouts for all tools. The overall average symmetry detection times in (on instances that finished) are between $2.47$ and $3.38$ times faster ($1.25$ for \textsc{BreakID}). When the cost for color refinement is accounted for, instances are still solved on average between $1.36$ and $1.51$ times faster ($1.09$ for \textsc{BreakID}) -- but especially when additionally considering the reduced number of timeouts, the overhead cost for color refinement is easily amortized.

For the general-purpose graph isomorphism solvers \Traces{} and \dejavu{}, the difference between unprocessed and preprocessed instances seems dramatic (a factor of $2.2$ and $4.3$ fewer timeouts, respectively).
In particular, \dejavu{} manages to close most of the performance gap to \saucy{}.
This is quite surprising, since \dejavu{} features no CNF-oriented techniques at all.
We record that on the preprocessed instances, \dejavu{} even manages to uniquely solve one instance that \saucy{} can not within the timeout -- albeit while \saucy{} does so for two other instances.  
Overall, on the preprocessed instances the performance of solvers is more comparable and thus seems less dependent on CNF-oriented optimizations. 

For the symmetry-breaking tool \textsc{BreakID}, which internally uses \saucy{}, the difference in the number of timeouts is also substantial -- in particular much more so than for \saucy{} individually ($2.07$ times fewer timeouts for \textsc{BreakID} versus the same number of timeouts for \saucy{}). 
Looking more closely at the results reveals that the removal of reducible symmetry speeds up \textsc{BreakID}:
while \saucy{} is often able to handle a lot of symmetry more efficiently through the CNF-oriented techniques, the computation time of \textsc{BreakID} depends directly on the number of generators detected. 

We record that in our benchmarks, the computation time taken up by our rudimentary preprocessor is not amortized by the time saved during symmetry detection. For a discussion as to why this is not a major practical concern see Section~\ref{sec:experiments_setup}.

\begin{figure}
	\centering
	\begin{subfigure}{0.45\textwidth}
	\scalebox{0.4}{
\begin{tikzpicture}

\begin{axis}[
log basis y={10},
tick align=outside,
tick pos=left,
x grid style={white!69.0196078431373!black},
xmajorgrids,
xmin=-2.54, xmax=44.54,
xtick style={color=black},
y grid style={white!69.0196078431373!black},
ymajorgrids,
ymin=1.16433835690349, ymax=171770.511657778,
ymode=log,
ytick style={color=black},
ylabel={number of hidden symmetries},
xlabel={cnf instance}
]
\draw[line width=4pt, green!30] (axis cs:-10,100000) -- (axis cs:205,100000);
\draw[draw=none,fill=black] (axis cs:-0.4,0.01) rectangle (axis cs:0.4,3.99999796584257);
\draw[draw=none,fill=black] (axis cs:0.6,0.01) rectangle (axis cs:1.4,2);
\draw[draw=none,fill=black] (axis cs:1.6,0.01) rectangle (axis cs:2.4,4);
\draw[draw=none,fill=black] (axis cs:2.6,0.01) rectangle (axis cs:3.4,4);
\draw[draw=none,fill=black] (axis cs:3.6,0.01) rectangle (axis cs:4.4,2048.00540119679);
\draw[draw=none,fill=black] (axis cs:4.6,0.01) rectangle (axis cs:5.4,2048.00190408777);
\draw[draw=none,fill=black] (axis cs:5.6,0.01) rectangle (axis cs:6.4,6.00003208830702);
\draw[draw=none,fill=black] (axis cs:6.6,0.01) rectangle (axis cs:7.4,1.99998995308089);
\draw[draw=none,fill=black] (axis cs:7.6,0.01) rectangle (axis cs:8.4,1.99999938240027);
\draw[draw=none,fill=black] (axis cs:8.6,0.01) rectangle (axis cs:9.4,100000);
\draw[draw=none,fill=black] (axis cs:9.6,0.01) rectangle (axis cs:10.4,3.99999458111775);
\draw[draw=none,fill=black] (axis cs:10.6,0.01) rectangle (axis cs:11.4,2);
\draw[draw=none,fill=black] (axis cs:11.6,0.01) rectangle (axis cs:12.4,2);
\draw[draw=none,fill=black] (axis cs:12.6,0.01) rectangle (axis cs:13.4,2048.00034906293);
\draw[draw=none,fill=black] (axis cs:13.6,0.01) rectangle (axis cs:14.4,5.99999554329466);
\draw[draw=none,fill=black] (axis cs:14.6,0.01) rectangle (axis cs:15.4,2048.0059458758);
\draw[draw=none,fill=black] (axis cs:15.6,0.01) rectangle (axis cs:16.4,2);
\draw[draw=none,fill=black] (axis cs:16.6,0.01) rectangle (axis cs:17.4,2048.0095254417);
\draw[draw=none,fill=black] (axis cs:17.6,0.01) rectangle (axis cs:18.4,100000);
\draw[draw=none,fill=black] (axis cs:18.6,0.01) rectangle (axis cs:19.4,100000);
\draw[draw=none,fill=black] (axis cs:19.6,0.01) rectangle (axis cs:20.4,1.99999886971997);
\draw[draw=none,fill=black] (axis cs:20.6,0.01) rectangle (axis cs:21.4,2);
\draw[draw=none,fill=black] (axis cs:21.6,0.01) rectangle (axis cs:22.4,2048.00747320285);
\draw[draw=none,fill=black] (axis cs:22.6,0.01) rectangle (axis cs:23.4,2048.00547559622);
\draw[draw=none,fill=black] (axis cs:23.6,0.01) rectangle (axis cs:24.4,100000);
\draw[draw=none,fill=black] (axis cs:24.6,0.01) rectangle (axis cs:25.4,1.99999980569204);
\draw[draw=none,fill=black] (axis cs:25.6,0.01) rectangle (axis cs:26.4,100000);
\draw[draw=none,fill=black] (axis cs:26.6,0.01) rectangle (axis cs:27.4,2);
\draw[draw=none,fill=black] (axis cs:27.6,0.01) rectangle (axis cs:28.4,2048.00708490595);
\draw[draw=none,fill=black] (axis cs:28.6,0.01) rectangle (axis cs:29.4,2048.00438152381);
\draw[draw=none,fill=black] (axis cs:29.6,0.01) rectangle (axis cs:30.4,2048.01116931578);
\draw[draw=none,fill=black] (axis cs:30.6,0.01) rectangle (axis cs:31.4,100000);
\draw[draw=none,fill=black] (axis cs:31.6,0.01) rectangle (axis cs:32.4,2048.00332546336);
\draw[draw=none,fill=black] (axis cs:32.6,0.01) rectangle (axis cs:33.4,2048.00682204018);
\draw[draw=none,fill=black] (axis cs:33.6,0.01) rectangle (axis cs:34.4,3.99999429607856);
\draw[draw=none,fill=black] (axis cs:34.6,0.01) rectangle (axis cs:35.4,2.00002411294503);
\draw[draw=none,fill=black] (axis cs:35.6,0.01) rectangle (axis cs:36.4,2.00001205639984);
\draw[draw=none,fill=black] (axis cs:36.6,0.01) rectangle (axis cs:37.4,5.99999554329466);
\draw[draw=none,fill=black] (axis cs:37.6,0.01) rectangle (axis cs:38.4,4);
\draw[draw=none,fill=black] (axis cs:38.6,0.01) rectangle (axis cs:39.4,2);
\draw[draw=none,fill=black] (axis cs:39.6,0.01) rectangle (axis cs:40.4,2);
\draw[draw=none,fill=black] (axis cs:40.6,0.01) rectangle (axis cs:41.4,100000);
\draw[draw=none,fill=black] (axis cs:41.6,0.01) rectangle (axis cs:42.4,65536.0143737776);
\end{axis}

\end{tikzpicture}}
	\caption{Hidden symmetry.} \label{fig:hidden}
	\end{subfigure}
	\begin{subfigure}{0.45\textwidth}
		\scalebox{0.4}{
\begin{tikzpicture}

\begin{axis}[
log basis y={10},
tick align=outside,
tick pos=left,
x grid style={white!69.0196078431373!black},
xmajorgrids,
xmin=-4.34, xmax=82.34,
xtick style={color=black},
y grid style={white!69.0196078431373!black},
ymajorgrids,
ymin=1.16434449839209, ymax=171770.468513564,
ymode=log,
ytick style={color=black},
ylabel={number of reducible symmetries},
xlabel={cnf instance}
]
\draw[line width=4pt, green!30] (axis cs:-10,100000) -- (axis cs:205,100000);
\draw[draw=none,fill=white!50.1960784313725!black] (axis cs:-0.4,0.01) rectangle (axis cs:0.4,100000);
\draw[draw=none,fill=white!50.1960784313725!black] (axis cs:0.6,0.01) rectangle (axis cs:1.4,100000);
\draw[draw=none,fill=white!50.1960784313725!black] (axis cs:1.6,0.01) rectangle (axis cs:2.4,16);
\draw[draw=none,fill=white!50.1960784313725!black] (axis cs:2.6,0.01) rectangle (axis cs:3.4,100000);
\draw[draw=none,fill=white!50.1960784313725!black] (axis cs:3.6,0.01) rectangle (axis cs:4.4,100000);
\draw[draw=none,fill=white!50.1960784313725!black] (axis cs:4.6,0.01) rectangle (axis cs:5.4,2);
\draw[draw=none,fill=white!50.1960784313725!black] (axis cs:5.6,0.01) rectangle (axis cs:6.4,100000);
\draw[draw=none,fill=white!50.1960784313725!black] (axis cs:6.6,0.01) rectangle (axis cs:7.4,100000);
\draw[draw=none,fill=white!50.1960784313725!black] (axis cs:7.6,0.01) rectangle (axis cs:8.4,100000);
\draw[draw=none,fill=white!50.1960784313725!black] (axis cs:8.6,0.01) rectangle (axis cs:9.4,64);
\draw[draw=none,fill=white!50.1960784313725!black] (axis cs:9.6,0.01) rectangle (axis cs:10.4,100000);
\draw[draw=none,fill=white!50.1960784313725!black] (axis cs:10.6,0.01) rectangle (axis cs:11.4,8);
\draw[draw=none,fill=white!50.1960784313725!black] (axis cs:11.6,0.01) rectangle (axis cs:12.4,100000);
\draw[draw=none,fill=white!50.1960784313725!black] (axis cs:12.6,0.01) rectangle (axis cs:13.4,6);
\draw[draw=none,fill=white!50.1960784313725!black] (axis cs:13.6,0.01) rectangle (axis cs:14.4,767);
\draw[draw=none,fill=white!50.1960784313725!black] (axis cs:14.6,0.01) rectangle (axis cs:15.4,16);
\draw[draw=none,fill=white!50.1960784313725!black] (axis cs:15.6,0.01) rectangle (axis cs:16.4,64);
\draw[draw=none,fill=white!50.1960784313725!black] (axis cs:16.6,0.01) rectangle (axis cs:17.4,100000);
\draw[draw=none,fill=white!50.1960784313725!black] (axis cs:17.6,0.01) rectangle (axis cs:18.4,100000);
\draw[draw=none,fill=white!50.1960784313725!black] (axis cs:18.6,0.01) rectangle (axis cs:19.4,100000);
\draw[draw=none,fill=white!50.1960784313725!black] (axis cs:19.6,0.01) rectangle (axis cs:20.4,100000);
\draw[draw=none,fill=white!50.1960784313725!black] (axis cs:20.6,0.01) rectangle (axis cs:21.4,100000);
\draw[draw=none,fill=white!50.1960784313725!black] (axis cs:21.6,0.01) rectangle (axis cs:22.4,2);
\draw[draw=none,fill=white!50.1960784313725!black] (axis cs:22.6,0.01) rectangle (axis cs:23.4,2);
\draw[draw=none,fill=white!50.1960784313725!black] (axis cs:23.6,0.01) rectangle (axis cs:24.4,100000);
\draw[draw=none,fill=white!50.1960784313725!black] (axis cs:24.6,0.01) rectangle (axis cs:25.4,100000);
\draw[draw=none,fill=white!50.1960784313725!black] (axis cs:25.6,0.01) rectangle (axis cs:26.4,100000);
\draw[draw=none,fill=white!50.1960784313725!black] (axis cs:26.6,0.01) rectangle (axis cs:27.4,100000);
\draw[draw=none,fill=white!50.1960784313725!black] (axis cs:27.6,0.01) rectangle (axis cs:28.4,100000);
\draw[draw=none,fill=white!50.1960784313725!black] (axis cs:28.6,0.01) rectangle (axis cs:29.4,100000);
\draw[draw=none,fill=white!50.1960784313725!black] (axis cs:29.6,0.01) rectangle (axis cs:30.4,100000);
\draw[draw=none,fill=white!50.1960784313725!black] (axis cs:30.6,0.01) rectangle (axis cs:31.4,2);
\draw[draw=none,fill=white!50.1960784313725!black] (axis cs:31.6,0.01) rectangle (axis cs:32.4,100000);
\draw[draw=none,fill=white!50.1960784313725!black] (axis cs:32.6,0.01) rectangle (axis cs:33.4,2);
\draw[draw=none,fill=white!50.1960784313725!black] (axis cs:33.6,0.01) rectangle (axis cs:34.4,4);
\draw[draw=none,fill=white!50.1960784313725!black] (axis cs:34.6,0.01) rectangle (axis cs:35.4,384);
\draw[draw=none,fill=white!50.1960784313725!black] (axis cs:35.6,0.01) rectangle (axis cs:36.4,2);
\draw[draw=none,fill=white!50.1960784313725!black] (axis cs:36.6,0.01) rectangle (axis cs:37.4,100000);
\draw[draw=none,fill=white!50.1960784313725!black] (axis cs:37.6,0.01) rectangle (axis cs:38.4,256);
\draw[draw=none,fill=white!50.1960784313725!black] (axis cs:38.6,0.01) rectangle (axis cs:39.4,767);
\draw[draw=none,fill=white!50.1960784313725!black] (axis cs:39.6,0.01) rectangle (axis cs:40.4,16);
\draw[draw=none,fill=white!50.1960784313725!black] (axis cs:40.6,0.01) rectangle (axis cs:41.4,100000);
\draw[draw=none,fill=white!50.1960784313725!black] (axis cs:41.6,0.01) rectangle (axis cs:42.4,100000);
\draw[draw=none,fill=white!50.1960784313725!black] (axis cs:42.6,0.01) rectangle (axis cs:43.4,100000);
\draw[draw=none,fill=white!50.1960784313725!black] (axis cs:43.6,0.01) rectangle (axis cs:44.4,100000);
\draw[draw=none,fill=white!50.1960784313725!black] (axis cs:44.6,0.01) rectangle (axis cs:45.4,100000);
\draw[draw=none,fill=white!50.1960784313725!black] (axis cs:45.6,0.01) rectangle (axis cs:46.4,100000);
\draw[draw=none,fill=white!50.1960784313725!black] (axis cs:46.6,0.01) rectangle (axis cs:47.4,16);
\draw[draw=none,fill=white!50.1960784313725!black] (axis cs:47.6,0.01) rectangle (axis cs:48.4,100000);
\draw[draw=none,fill=white!50.1960784313725!black] (axis cs:48.6,0.01) rectangle (axis cs:49.4,100000);
\draw[draw=none,fill=white!50.1960784313725!black] (axis cs:49.6,0.01) rectangle (axis cs:50.4,100000);
\draw[draw=none,fill=white!50.1960784313725!black] (axis cs:50.6,0.01) rectangle (axis cs:51.4,2);
\draw[draw=none,fill=white!50.1960784313725!black] (axis cs:51.6,0.01) rectangle (axis cs:52.4,2);
\draw[draw=none,fill=white!50.1960784313725!black] (axis cs:52.6,0.01) rectangle (axis cs:53.4,100000);
\draw[draw=none,fill=white!50.1960784313725!black] (axis cs:53.6,0.01) rectangle (axis cs:54.4,100000);
\draw[draw=none,fill=white!50.1960784313725!black] (axis cs:54.6,0.01) rectangle (axis cs:55.4,23);
\draw[draw=none,fill=white!50.1960784313725!black] (axis cs:55.6,0.01) rectangle (axis cs:56.4,100000);
\draw[draw=none,fill=white!50.1960784313725!black] (axis cs:56.6,0.01) rectangle (axis cs:57.4,100000);
\draw[draw=none,fill=white!50.1960784313725!black] (axis cs:57.6,0.01) rectangle (axis cs:58.4,100000);
\draw[draw=none,fill=white!50.1960784313725!black] (axis cs:58.6,0.01) rectangle (axis cs:59.4,2);
\draw[draw=none,fill=white!50.1960784313725!black] (axis cs:59.6,0.01) rectangle (axis cs:60.4,100000);
\draw[draw=none,fill=white!50.1960784313725!black] (axis cs:60.6,0.01) rectangle (axis cs:61.4,6);
\draw[draw=none,fill=white!50.1960784313725!black] (axis cs:61.6,0.01) rectangle (axis cs:62.4,100000);
\draw[draw=none,fill=white!50.1960784313725!black] (axis cs:62.6,0.01) rectangle (axis cs:63.4,100000);
\draw[draw=none,fill=white!50.1960784313725!black] (axis cs:63.6,0.01) rectangle (axis cs:64.4,100000);
\draw[draw=none,fill=white!50.1960784313725!black] (axis cs:64.6,0.01) rectangle (axis cs:65.4,5);
\draw[draw=none,fill=white!50.1960784313725!black] (axis cs:65.6,0.01) rectangle (axis cs:66.4,100000);
\draw[draw=none,fill=white!50.1960784313725!black] (axis cs:66.6,0.01) rectangle (axis cs:67.4,5);
\draw[draw=none,fill=white!50.1960784313725!black] (axis cs:67.6,0.01) rectangle (axis cs:68.4,40320);
\draw[draw=none,fill=white!50.1960784313725!black] (axis cs:68.6,0.01) rectangle (axis cs:69.4,3);
\draw[draw=none,fill=white!50.1960784313725!black] (axis cs:69.6,0.01) rectangle (axis cs:70.4,100000);
\draw[draw=none,fill=white!50.1960784313725!black] (axis cs:70.6,0.01) rectangle (axis cs:71.4,100000);
\draw[draw=none,fill=white!50.1960784313725!black] (axis cs:71.6,0.01) rectangle (axis cs:72.4,100000);
\draw[draw=none,fill=white!50.1960784313725!black] (axis cs:72.6,0.01) rectangle (axis cs:73.4,2520);
\draw[draw=none,fill=white!50.1960784313725!black] (axis cs:73.6,0.01) rectangle (axis cs:74.4,100000);
\draw[draw=none,fill=white!50.1960784313725!black] (axis cs:74.6,0.01) rectangle (axis cs:75.4,100000);
\draw[draw=none,fill=white!50.1960784313725!black] (axis cs:75.6,0.01) rectangle (axis cs:76.4,6);
\draw[draw=none,fill=white!50.1960784313725!black] (axis cs:76.6,0.01) rectangle (axis cs:77.4,100000);
\draw[draw=none,fill=white!50.1960784313725!black] (axis cs:77.6,0.01) rectangle (axis cs:78.4,2);
\end{axis}

\end{tikzpicture}}
	\caption{Reducible symmetry.} \label{fig:reducible}
	\end{subfigure}
	\caption{The diagrams show instances where non-trivial symmetries of the respective type exist (a value of $1$ would mean no non-trivial symmetries existed). The left diagram shows the number of hidden symmetries detected in the instance. The right diagram shows instances where the initial group size was larger than the preprocessed group size, i.e., the number of reducible symmetries in the instance. Both diagrams are cut off at $10^5$ for clarity (values go beyond $10^{2000}$ in both diagrams).} \label{fig:sym_red_hid}
\end{figure}
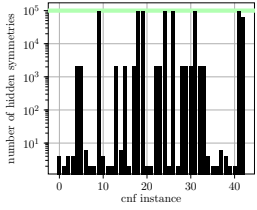
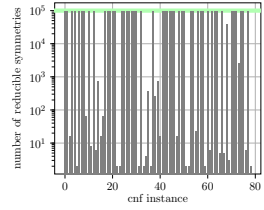

\textbf{Detected Symmetry.} There are substantial differences in the detected symmetry of the unprocessed and preprocessed instances.

Using our preprocessor, we uncovered hidden symmetry in $43$ instances (see Figure~\ref{fig:hidden}). 
We found reducible symmetry in $73$ instances.
Indeed, these were often very large groups (see Figure~\ref{fig:reducible}) that in many cases even included all symmetries of the respective instance. 

\section{Conclusion and Future Work}
Unit, pure, subsumption and blocked clause elimination are symmetry-preserving.
If these simplifications are to be applied to a formula anyway, then symmetries should be detected and exploited after simplifying the formula.
Other techniques, such us adding clauses to the formula derived using resolution, self-subsumption and variable elimination, turn out to not be symmetry-preserving, and can potentially remove symmetry from formulas in an undesirable manner.
Going beyond the analysis, for variable elimination and self-subsumption we defined restricted variants, which enable the rules to be applied in a symmetry-preserving manner.

In practice, instances simplified using a symmetry-preserving preprocessor are substantially easier to handle for symmetry detection tools. 
In fact, the structure of instances changes considerably and techniques previously designed for CNF formulas seem to become less impactful.
Most importantly, this opens up the opportunity to tune symmetry detection tools to solve \emph{preprocessed} CNF formulas (e.g., to address the shortcomings we raised with the graph encoding in Section~\ref{sec:impl}).

Regarding symmetries, preprocessed instances, due to the symmetry-preserving nature of transformations, are guaranteed to contain at least all applicable symmetries of the unprocessed instances. But indeed, in $21$\% of the symmetrical instances tested they even contained \emph{more} symmetry.
We also found reducible symmetry, that exclusively interacts with literals removable through preprocessing, in $40$\% of the benchmark instances.
Overall, we believe that this motivates an even deeper analysis into how much ``exploitable'' symmetry is in the instances, and how it can be systematically uncovered.
This could also involve tuning and testing with state-of-the-art preprocessing, SAT solvers and symmetry exploitation in the loop: not only to gain a better understanding of the potential effects of the different types of symmetry, but also the interaction between algorithms.

There are even more avenues to expand upon or apply the work in this paper. 
For example, there are many other preprocessing techniques in the literature (e.g., bounded variable addition) and dynamic techniques (e.g., inprocessing using learned clauses) that could be analyzed.

\section*{Acknowledgements}
I want to thank Moritz Lichter, Pascal Schweitzer, Constantin Seebach and Damien Zufferey 
for the valuable discussions at different stages of the project.
I also want to thank the anonymous reviewers at SAT 2022 for pointing out an error with the counter-examples used in an earlier version of the paper.

\bibliography{main}
\bibliographystyle{plain}
\end{document}